\DeclareMathOperator{\tr}{tr}
\newcommand{\Deg}{\ensuremath{^\circ}}
\newtheorem{prop}{Proposition}[section]
\begin{document}

\title{Robust Detection for Mills Cross Sonar}

\author{
Olivier Lerda$^{1}$, Ammar Mian$^{2}$, Guillaume Ginolhac$^{2}$, {\it Senior Member, IEEE}, Jean-Philippe Ovarlez$^{3,4}$, {\it Member, IEEE}, Didier Charlot$^{1}$

\thanks{$^{1}$Exail Sonar Systems division, La Ciotat, France.}
\thanks{$^{2}$LISTIC, University Savoie Mont-Blanc, Annecy, France.}
\thanks{$^{3}$ONERA - DEMR, University Paris-Saclay, Palaiseau, France.}
\thanks{$^{4}$CentraleSupelec - SONDRA, University Paris-Saclay, Gif-sur-Yvette, France.}
}

\date{}

\maketitle
\setcounter{page}{1}

\begin{abstract}
Multi-array systems are widely used in sonar and radar applications. They can improve communication speeds, target discrimination, and imaging.
In the case of a multibeam sonar system that can operate two receiving arrays, we derive new adaptive tests to improve detection capabilities compared to traditional sonar detection approaches.
To do so, we more specifically consider correlated arrays, whose covariance matrices are estimated up to scale factors, and an impulsive clutter. In a partially homogeneous environment, the 2-step Generalized Likelihood ratio Test (GLRT) and Rao approach lead to a generalization of the Adaptive Normalized Matched Filter (ANMF) test and an equivalent numerically simpler detector with a well-established texture Constant False Alarm Rate (CFAR) behavior.
Performances are discussed and illustrated with theoretical examples, numerous simulations, and insights into experimental data.
Results show that these detectors outperform their competitors and have stronger robustness to environmental unknowns.
\end{abstract}

\begin{IEEEkeywords}
Adaptive Normalized Matched Filter, Complex Elliptically Symmetric distributions, Multiple-Input Multiple-Output, Robustness, Sonar target detection, Tyler's M-estimator.\end{IEEEkeywords}

\section{Introduction}
\label{sec:introduction}

\subsection{Background and motivations}
\label{subsec:background}

Forward-Looking sonars are solutions for perceiving the underwater environment. In the context of a growing need for decision-making autonomy and navigation safety, they have become fundamental tools for understanding, anticipating obstacles and potential dangers, analyzing and identifying threats. They offer efficient results, allowing detection, tracking, and classification of surface \cite{Karoui2015}, water column \cite{Quidu2012}, or bottom targets \cite{Quidu2010}, in civil \cite{Petillot2001} or military applications such as mine warfare \cite{Braginsky2016}.

At the detection level, monovariate statistical tests under the Gaussian or non-Gaussian interference assumption, defined a priori, remain the prevalent approaches \cite{Abraham2002, Abraham2010, Abraham2009}.
Nevertheless, many works on multivariate statistics have shown great interest compared to algorithms developed from monovariate statistics in a large number of application fields. Indeed, multivariate statistics allow advanced modeling of propagation environments. By following these precepts, \cite{Kelly1986} gets a central detector with the total unknown of the noise parameters, \cite{Robey1992} first derives a detector under the assumption of a known covariance then substitutes it, through a two-step procedure, by an appropriate estimator, finally, \cite{Scharf1994 } and \cite{ Kraut2001} have shown the relevance of subspace data models in  consideration of mismatched signals for which, as an example, the target echo would not come precisely from the center of the main beam. These seminal works are now references in remote sensing, and ground or air surveillance but mainly for radar systems. 

Moreover, in the radar field, phenomenal progress has also been made in recent decades, guided by increasingly complex systems that required profound changes in concepts and processing. This is especially the case of Space-Time Adaptive Processing (STAP) for airborne radars \cite{Ward1994}, which bring considerable improvements in the ability to discriminate moving targets at very low speeds, or Multiple-Input Multiple-Output (MIMO) systems that advance detection performance \cite{Fishler2006}, \cite{Fishler2004} and resolution \cite{Lehmann2006}, \cite{Li2007} by exploiting spatial, frequency or waveform diversities. In underwater acoustics, the use of multi-ping processing is made more complicated due to the low speed of propagation of acoustic waves in the marine environment which does not allow to respect the assumption of non-moving targets. The use of multivariate multi-ping approaches then requires more elaborate migration compensation processing that is only possible if the level of coherence is sufficient between the different pings, which is not always the case (environmental fluctuation and spatial decorrelation). Thus, although some preliminary work has emerged in recent years \cite{Wei2012, Li2009, Sasi2010, Pillai2003, Pan2022, Liu2021, Tucker2011}, these methods remain little used in sonar systems.

This paper focuses on the adaptive detection of a point target by a correlated orthogonal arrays sonar system. Inspired by these previous works, we will first show that multibeam systems are perfectly adapted to multivariate formalism. We will then propose two new detectors following the GLRT, and Rao two-step approaches \cite{Xue2022, Ciuonzo2017, Sun2022}, assuming heterogeneous or partially homogeneous clutter \cite{Wang2013, Liu2014}. The performance in a Gaussian environment will first be evaluated. We will show that considering a sonar system with Mills cross arrays \cite{Mills53} leads to a better detectability of targets by reducing the clutter ridge.

Nevertheless, complex multi-normality can sometimes be a poor approximation of physics. This is the case for fitting high-resolution clutter, impulsive noise, outliers, and interference. The Complex Elliptic Symmetric (CES) distributions \cite{Ollila2012} including the well-known compound Gaussian subclass are then natural extensions allowing the modeling of distributions with heavy or light tails in radar \cite{Farina1997, Billingsley1999, Greco2004, Ovarlez2016} as in sonar \cite{Saucan2016}. Mixtures of Scaled Gaussian (MSG) distributions \cite{Hippertferrer2022} are derived and easily tractable approaches. In this context, particular covariance matrix estimators are recommended for adaptive processing such as the Tyler estimator \cite{Pascal2008} or Huber's M-estimator \cite{Mahot2013}. Their uses lead to very substantial performance gains in \cite{Chong201003}, \cite{Petrov2018}, and \cite{Ovarlez2016}. In our application, these considerations will allow us to design a new covariance matrix estimator. The performance of the detectors in a non-Gaussian impulsive environment can then be studied. On this occasion, we will show on experimental data, this estimator's interest in the robustness to corruption of training data.

\subsection{Paper organization and contributions}
\label{subsec:organization}

We use a Mills cross sonar with two receiving arrays for point-like target detection in heterogeneous (non-)Gaussian environment. The main contributions are summarized below:
\begin{itemize}
    \item[\textbullet] We propose a new covariance matrix estimator generalizing the usual Tyler’s estimator in a context of dual and correlated arrays.
    \item[\textbullet] On these assumptions, two adaptive coherent detectors are derived based on the two-step GLRT and Rao’s test.
    \item[\textbullet] The performance of these detectors, in terms of false alarm and probability of detection, is evaluated and compared. We highlight the constant false alarm rate property with respect to the power variation and the speckle covariance matrix.
    \item[\textbullet] The contribution of a sonar architecture with orthogonal arrays in the spatial reduction of the clutter ridge is shown.
    \item[\textbullet] Experimental results demonstrate the usefulness of the proposed models and detection schemes on real sonar data.
\end{itemize}

This paper is organized as follows: Section \ref{sec:seapix} presents a dual array sonar system and the experimental acquisition conditions on which this work is based. In Section \ref{sec:detection}, the signal model and detection problem are formalized. According to the two-step GLRT and Rao test design, coherent adaptive detectors are derived in Section \ref{sec:detectors}. The performances are evaluated, compared, and analyzed in Sections \ref{sec:results_simu} and \ref{sec:results_exp}. Conclusions are given in Section \ref{sec:conclusions}. Proofs and complementary results are provided in the Appendices. \\

\indent \textit{Notations}: Matrices are in bold and capital, vectors in bold. $\mathrm{Re}(.)$ and $\mathrm{Im}(.)$ stand respectively for real and imaginary part operators. For any matrix $\mathbf{A}$ or vector, $\mathbf{A}^T$ is the transpose of $\mathbf{A}$ and $\mathbf{A}^H$ is the Hermitian transpose of $\mathbf{A}$. $\mathbf{I} $ is the identity matrix and $\mathcal{CN}(\boldsymbol{\mu},\boldsymbol{\Gamma})$ is the circular complex Normal distribution of mean $\boldsymbol{\mu}$ and covariance matrix $\boldsymbol{\Gamma}$. $\otimes$ denotes the Kronecker product.

\section{Seapix system}
\label{sec:seapix}

\subsection{Generalities}
\label{subsec:generalities}

The SEAPIX system is a three-dimensional multibeam echosounder developed by the sonar systems division of Exail (formerly iXblue) \cite{Mosca2016}. It is traditionally used by fishing professionals  as a tool to assist in the selection of catches and the respect of quotas \cite{Corbieres2017}, by hydro-acousticians for the monitoring of stocks and morphological studies of fish shoals \cite{Tallon2021}, by hydrographers for the establishment of bathymetric and sedimentary marine charts \cite{Matte2017, Nguyen2016}.

Two uniform linear arrays of 64 elements, arranged in Mills cross, are entirely symmetric, reversible in transmission/reception, and electronically steerable. They generate transverse (i.e. across-track) or longitudinal (along-track) acoustic swaths of 120\Deg\ by 1.8\Deg, tiltable on +/-60\Deg, providing a volumic coverage of the water column.

\begin{figure}[htbp]
	\centering
	\includegraphics[width = 0.35\linewidth]{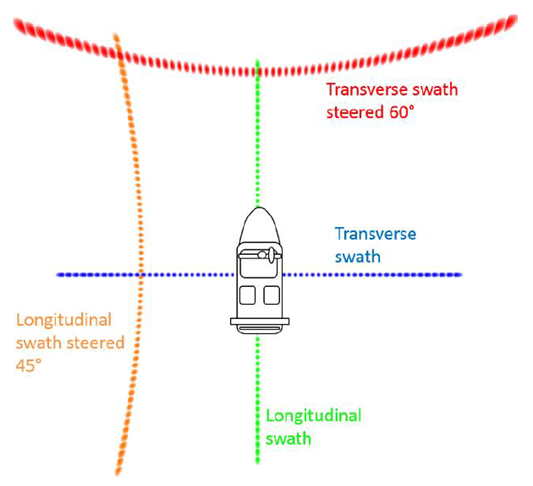}
	\includegraphics[width = 0.63\linewidth]{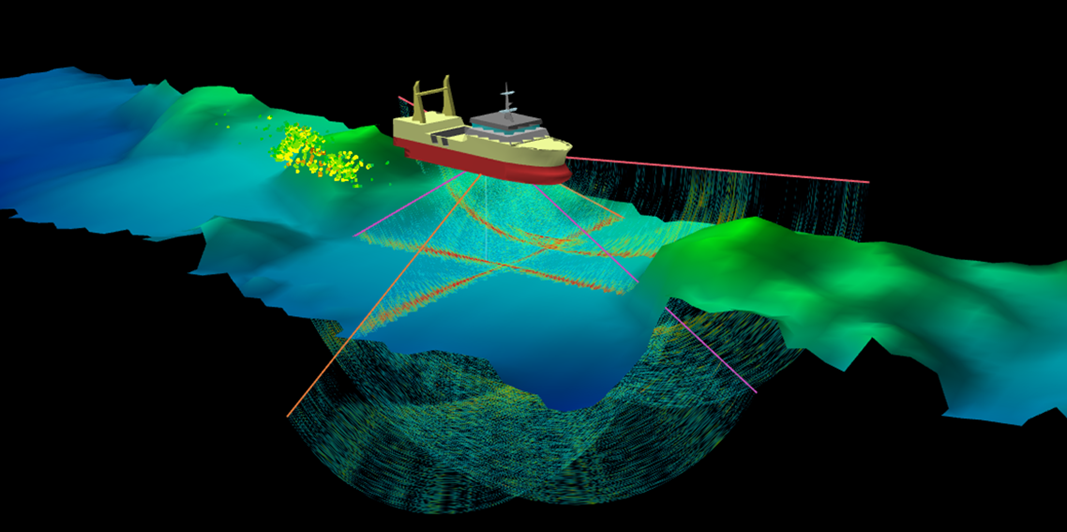}
	\caption{Multiswath capabilities seen from a schematic representation (left): transverse swath footprint is blue, a 60\Deg\ steered transverse swath is displayed in red, the longitudinal swath is green, and a 45\Deg\ steered longitudinal swath is orange. Illustration from the operator software (right): An across-track, an along-track, and a tilted along-track swath are observed, as well as an aggregation of fishes and an already constructed bathymetric map.	}
	\label{fig:multiswath}
\end{figure}

\subsection{FLS experiment}
\label{subsec:fls_experiment}

The SEAPIX system is experimented in Forward-Looking Sonar configuration for predictive target detection and identification. In this context of use, the active face is oriented in the ``forward" direction rather than toward the seabed. 

In our study, the sensor is installed on the DriX Uncrewed Surface Vehicle (USV) and inclined by 20\Deg\ according to the pitch angle to the sea surface (Figure \ref{fig:drix_and_target} left). In transmission, the vertical antenna (formerly longitudinal) generates an enlarged beam of 9\Deg\ in elevation by 120\Deg\ in azimuth. A motion stabilized, and electronically tilted firing angle allows the upper limit of the -3~dB transmit beamwidth to graze the sea surface and the lower limit to reach a 50~m depth bottom at about 300~m range. In receive, the horizontal antenna (formerly transverse) generates beams of 2\Deg\ in azimuth by 120\Deg\ in elevation and the vertical antenna (which is used again) of 2\Deg\ in elevation by 120\Deg\ in azimuth. A rigid sphere of 71 cm diameter (Figure \ref{fig:drix_and_target} right) is also immersed at 25~m depth in the middle of the water column.

\begin{figure}[htbp]
	\centering
	\includegraphics[width = 0.49\linewidth]{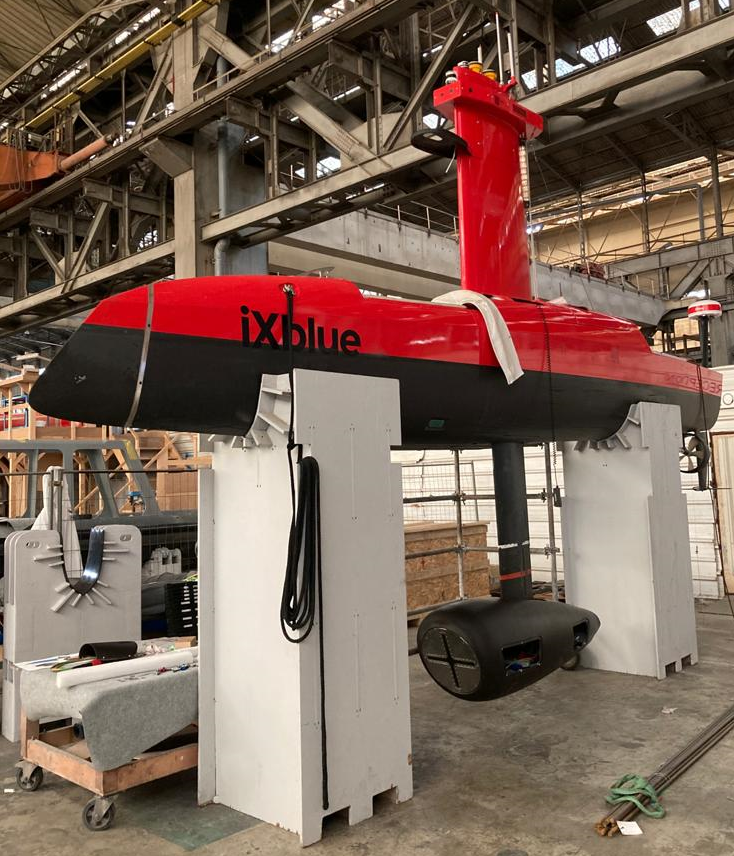}
	\includegraphics[width = 0.482\linewidth]{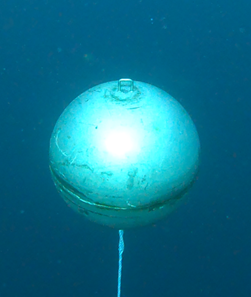}
	\caption{Exail's DriX USV (left): The cross-shaped linear arrays are visible in the gondola. The real target in open water (right): a metallic sphere of target strength $TS = -15~$dB.}
	\label{fig:drix_and_target}
\end{figure}

So after each transmission of 20~ms Linear Frequency Modulation pulses centered at 150~KHz with a 10~KHz bandwidth and a Pulse Repetition Interval of 0.5~s, the sensor signals from the two antennas are simultaneously recorded, allowing an azimuth and elevation representation of the acoustic environment (Figure \ref{fig:fls_swath}).

\begin{figure}[htbp]
	\centering
    \includegraphics[width = 1\linewidth]{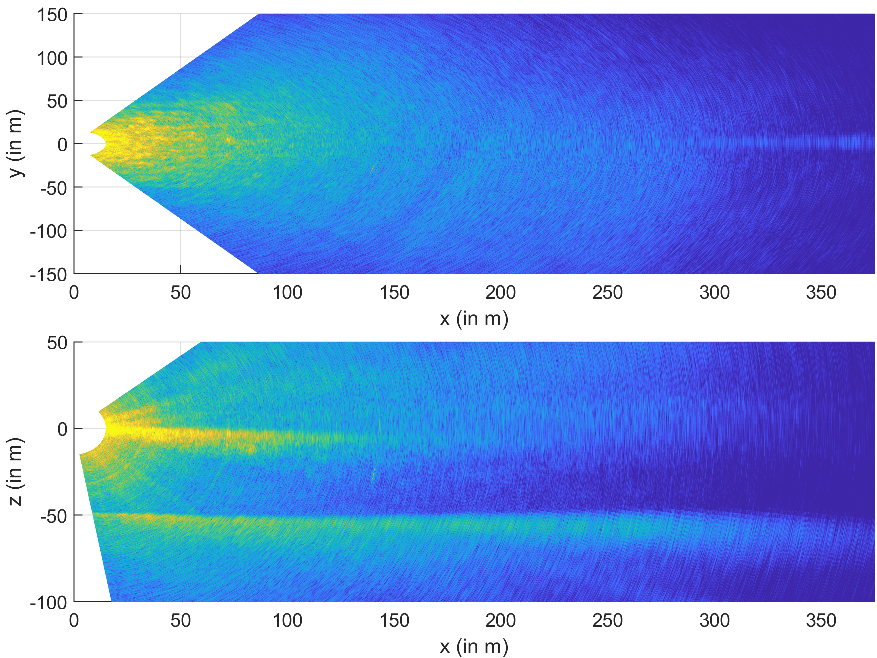}
	\caption{Azimuth (top) and elevation (bottom) view from a single ping, in 50~m depth in La Ciotat bay.}
	\label{fig:fls_swath}
\end{figure}

\subsection{Pre-processing and data format}
\label{subsec:pre_processing}

The signals from the 128 sensors provided by the sonar's embedded software are demodulated in baseband (In-Phase and Quadrature components) and decimated at 43~KHz. During reception pre-processing, the digitized samples are compensated for the time-varying gain, filtered to the bandwidth of the waveform, pulse compressed, and then decimated again to bandwidth. Finally, a ping dataset is a matrix of about 6000 range bins from 15~m to 400~m, by 64 sensors, by two arrays.

\section{Detection Schemes}
\label{sec:detection}

\subsection{Data model for a single array}

At each time, we have two digitized synchronized complex data vectors of $m=64$ elements, called ``snapshots'', which can be written as (by omitting the temporal parameterization):
\begin{equation}
	\mathbf{x}_{i} = 
	\left[
	x_{i,1}, x_{i,2}, \cdots,  x_{i,m}
	\right]^T\, ,
\end{equation}
where $i=1,2$ is the array identifier (respectively horizontal and vertical). After the pre-processing steps, a point-like target observed on the antenna $i$ is simply:
\begin{equation}
	\mathbf{x}_{i} = \alpha_{i} \, \mathbf{p}_{i} + \mathbf{z}_{i} \, ,
\end{equation}
where $\mathbf{x}_{i} \in \mathbb{C}^m$ is the received signal, $\alpha_{i}$ is an unknown complex target amplitude, $\mathbf{p}_{i} \in \mathbb{C}^m$ stands for the known deterministic angular steering vector, $\mathbf{z}_{i} \in \mathbb{C}^m$ is a mixture of scaled Gaussian (MSG) random vector admitting the stochastic representation:
\begin{equation}
	\mathbf{z}_{i} \overset{d}{=} \sqrt{\tau_i} \, \mathbf{c}_{i}\, .
	\label{eq:def_CG}
\end{equation}

The \emph{texture} $\tau_i$ is an unknown positive deterministic scalar parameter, presumably different for each range bin (i.e. for all time samples). The \emph{speckle} $\mathbf{c}_{i} \sim \mathcal{CN}(\mathbf{0}, \sigma_i^2 \, \mathbf{M}_{ii}) \in \mathbb{C}^m$ is a complex circular Gaussian random vector whose covariance matrix $\mathbf{M}_{ii}$ is known up to a scaling factor $\sigma_i^2$. The term \emph{speckle} should be understood in the sense of CES statistics rather than as the result of a sum of contributions from reflections in a resolution cell. This model is strongly related to the class of compound Gaussian distributions \cite{Ollila2012}, which assumes a speckle-independent random texture with a given density $p_\tau$. Considering deterministic texture allows building estimators and detectors which are completely independent of the knowledge of the texture PDF. In this way, detectors are no longer specific to a single and unique distribution, which can be seen as an additional source of robustness. Furthermore, the MSG distribution is more robust than the Gaussian one because the relative scaling between samples allows flexibility in the presence of heterogeneities, such as impulsive noise, outliers, and inconsistent data. This model explicitly allows considering the power fluctuation across range bins, especially for heavy-tailed clutter distributions.

The detection problem is written as a binary hypothesis test:
\begin{equation}
\left\lbrace
	\begin{array}{ll}
		H_0 : \mathbf{x}_i = \mathbf{z}_i & ; \quad \mathbf{x}_{i,k} = \mathbf{z}_{i,k} \quad k = 1 \ldots K\, , \\
		H_1 : \mathbf{x}_i = \alpha_{i} \, \mathbf{p}_{i} + \mathbf{z}_{i} & ; \quad \mathbf{x}_{i,k} = \mathbf{z}_{i,k} \quad k = 1 \ldots K \, .
	\end{array}
\right.\
\label{eq:test_hypotheses_bin_1_antenne}
\end{equation}

In \eqref{eq:test_hypotheses_bin_1_antenne} it is assumed that $K \geq m$ independent and identically distributed (i.i.d.) signal-free secondary data $\mathbf{x}_{i,k} \in \mathbb{C}^m$ are available under both hypotheses for background parameters estimation. We recall that $\mathbf{z}_{i,k} \overset{d}{=} \sqrt{\tau_{i,k}} \, \mathbf{c}_{i,k}$, $\mathbf{c}_{i,k} \sim \mathcal{CN}(\mathbf{0}, \mathbf{M}_{ii})$.

Conditionally to the unknown deterministic texture, the densities of $\mathbf{x}_{i}$ under $H_0$ and $H_1$ are Gaussian:
\begin{equation}
	\begin{array}{ll}
		p_{\mathbf{x}_{i}}(\mathbf{x}_{i}; H_0) = \cfrac{1}{\pi^m \tilde{\sigma}_i^{2m} \lvert \mathbf{M}_{ii} \rvert} \exp \left( - \cfrac{\mathbf{x}_i^H \mathbf{M}_{ii}^{-1} \mathbf{x}_{i}}{\tilde{\sigma}_i^{2}} \right) \, ,\\
		p_{\mathbf{x}_{i}}(\mathbf{x}_{i}; H_1) = \\ 
		\cfrac{1}{\pi^m \tilde{\sigma}_i^{2m} \lvert \mathbf{M}_{ii} \rvert} \exp \left( - \cfrac{{\left( \mathbf{x}_{i} - \alpha_{i} \mathbf{p}_{i}\right)}^H \mathbf{M}_{ii}^{-1} \left( \mathbf{x}_{i} - \alpha_{i} \mathbf{p}_{i} \right) }{\tilde{\sigma}_i^{2}} \right)\, ,
	\end{array}
\end{equation}
where $\tilde{\sigma}_i = \sigma_i \, \sqrt{\tau_i}$.

\subsection{Data model for the two arrays}

If we consider the two antennas, this model can be written more appropriately:
\begin{equation}
\left\lbrace
	\begin{array}{ll}
	H_0 : \mathbf{x} = \mathbf{z} & ; \quad \mathbf{x}_k = \mathbf{z}_k \quad k = 1 \ldots K \, ,\\
	H_1 : \mathbf{x} = \mathbf{P} \, \boldsymbol{\alpha} + \mathbf{z} & ; \quad \mathbf{x}_k = \mathbf{z}_k \quad k = 1 \ldots K \, ,
	\end{array}
\right. 
\label{eq:test_hypotheses_bin}
\end{equation}
where $\mathbf{x} = \begin{bmatrix} \mathbf{x}_1 \\ \mathbf{x}_2 \end{bmatrix} \in \mathbb{C}^{2m}$ is the concatenation of the two received signals, $\boldsymbol{\alpha} = \begin{bmatrix} \alpha_1 \\ \alpha_2 \end{bmatrix} \in \mathbb{C}^{2}$ is the vector of the target amplitudes and $\mathbf{z} = \begin{bmatrix} \mathbf{z}_1 \\ \mathbf{z}_2 \end{bmatrix} \in \mathbb{C}^{2m}$ is the additive clutter. The matrix $\mathbf{P} = \begin{bmatrix} 
\mathbf{p}_1 & \mathbf{0} \\ 
\mathbf{0}   & \mathbf{p}_2 \\
\end{bmatrix} \in \mathbb{C}^{2m \times 2}$ contains the steering vectors. $\left\{\mathbf{x}_k = \begin{bmatrix} \mathbf{x}_{1,k} \\ \mathbf{x}_{2,k} \end{bmatrix}\right\}_{k\in [1, K]} \in \mathbb{C}^{2m}$ for $K \geq 2\,m$ are i.i.d. signal-free secondary data.

This formulation allows considering the correlation between sensors of arrays. The covariance is $\mathbf{M} = \begin{bmatrix} \mathbf{M}_{11} & \mathbf{M}_{12} \\ \mathbf{M}_{21} & \mathbf{M}_{22} \end{bmatrix}$, with $\mathbf{M}_{ii}$ the block-covariance matrix of array $i$, and $\mathbf{M}_{ij}=\mathbf{M}_{ji}^H$ the cross-correlation block of array $i$ and $j$. The partitioned inverse covariance will be denoted as $\mathbf{M}^{-1} = \begin{bmatrix} \mathbf{M}_{11}^{-1} & \mathbf{M}_{12}^{-1} \\ \mathbf{M}_{21}^{-1} & \mathbf{M}_{22}^{-1} \end{bmatrix}$.

We further assume that the covariance is known, or estimated, up to two scalars, possibly different on each array. These scalars are conditioning the $\mathbf{M}_{ii}$ block-covariance, but also all the cross-correlations blocks associated with the array $i$. We can therefore write:
\begin{equation}
	\widetilde{\mathbf{C}} = \widetilde{\boldsymbol{\Sigma}} \,\mathbf{M} \, \widetilde{\boldsymbol{\Sigma}} \, ,
\end{equation}
with $\widetilde{\boldsymbol{\Sigma}} = \begin{bmatrix} 
\tilde{\sigma}_1 \, \mathbf{I}_m & \mathbf{0} \\ 
\mathbf{0} & \tilde{\sigma}_2 \, \mathbf{I}_m \\
\end{bmatrix}$ = $\begin{bmatrix} 
\tilde{\sigma}_1	 & 0 \\ 
0 & \tilde{\sigma}_2 \\
\end{bmatrix} \otimes \mathbf{I}_m$ the unknown diagonal matrix of scalars $\sigma_i$ and $\tau_i$. We recap that the $\sigma_i$ parameter drives the partial homogeneity of the data (i.e. the difference in scale factor between the covariance matrices of the primary and secondary data) and $\tau_i$ drives the non-Gaussianity of the data (i.e. the power variation of the observations over time).

In this model, each array, although correlated, has a possibly distinct texture and an unknown scaling factor on the covariance matrix which may also be dissimilar. It would therefore become entirely possible to model, as an example, a first array whose observations would be Gaussian ($\tau_1 = 1$) and homogeneous ($\sigma_1 = 1$), and a second array whose observations would be \emph{K}-distributed (for which $\tau_2$ is a realization of a Gamma distribution) and non-homogeneous ($\sigma_2 \neq 1$). The PDFs under each hypothesis can be rewritten as:
\begin{equation}
	\begin{array}{ll}
		p_{\mathbf{x}}(\mathbf{x}; H_0) = \cfrac{1}{\pi^{2m} \lvert \widetilde{\mathbf{C}} \rvert} \exp \left( - {\mathbf{x}}^H \widetilde{\mathbf{C}}^{-1} \mathbf{x} \right) \, ,\\
		p_{\mathbf{x}}(\mathbf{x}; H_1) = \cfrac{1}{\pi^{2m} \lvert \widetilde{\mathbf{C}} \rvert} \exp \left( - {\left( \mathbf{x} - \mathbf{P} \boldsymbol{\alpha} \right)}^H \widetilde{\mathbf{C}}^{-1} \left( \mathbf{x} - \mathbf{P} \boldsymbol{\alpha} \right) \right) .
	\end{array}
\label{eq:PDF_2_ss_antennes}
\end{equation}

\section{ROBUST DETECTORS}
\label{sec:detectors}

We discuss the derivation of detectors using the GLRT and Rao procedures. Following a two-step approach, the covariance matrix $\mathbf{M}$ will first be assumed known, and then replaced by an appropriate estimator.

\subsection{Detectors' derivation with $\mathbf{M}$ known (step-1)}
\label{subsec:detectors_step_1}

\subsubsection{Generalized Likelihood Ratio Test}
\label{subsubsec:GLRT}

The Generalized Likelihood Ratio Test (GLRT) design methodology proposes to solve the detection problem from the ratio of the probability densities function under $H_1$ and $H_0$, substituting the unknown parameters with their maximum likelihood estimates:
\begin{equation}
	L_G(\mathbf{x}) = \cfrac{\underset{\boldsymbol{\alpha}}{\max} \, \underset{\widetilde{\boldsymbol{\Sigma}}}{\max} \, p_{\mathbf{x}}\left(\mathbf{x}; \boldsymbol{\alpha},\widetilde{\boldsymbol{\Sigma}}, H_1\right)}{\underset{\widetilde{\boldsymbol{\Sigma}}}{\max} \, p_{\mathbf{x}}\left(\mathbf{x}; \widetilde{\boldsymbol{\Sigma}}, H_0\right)} \, .
\end{equation}

\begin{prop} 
The GLRT for the hypothesis test \eqref{eq:test_hypotheses_bin} is given by:
\begin{equation}
	L_G(\mathbf{x}) = \cfrac{\hat{\tilde{\sigma}}_{1_0} \, \hat{\tilde{\sigma}}_{2_0}}{\hat{\tilde{\sigma}}_{1_1} \, \hat{\tilde{\sigma}}_{2_1}}\, ,
	\label{eq:glrt_detector}
\end{equation}
where
\begin{equation*}
\left( \hat{\tilde{\sigma}}_{1_0}  \hat{\tilde{\sigma}}_{2_0} \right) ^2 = 
\cfrac{\emph{Re} \left( \mathbf{x}_1^H \mathbf{M}_{12}^{-1} \mathbf{x}_2 \right)}{m}  
+ \sqrt{\cfrac{\mathbf{x}_1^H \mathbf{M}_{11}^{-1} \mathbf{x}_1}{m}  \cfrac{\mathbf{x}_2^H \mathbf{M}_{22}^{-1} \mathbf{x}_2}{m}}  \, ,
\end{equation*}
and
\begin{equation*}
\left( \hat{\tilde{\sigma}}_{1_1} \, \hat{\tilde{\sigma}}_{2_1} \right) ^2 = 
\cfrac{\emph{Re} \left[ \mathbf{x}_1^H \left( \mathbf{M}_{12}^{-1} - \mathbf{D}_{12}^{-1} \right) \mathbf{x}_2 \right]}{m} \\
+ \sqrt{\cfrac{\mathbf{x}_1^H \left( \mathbf{M}_{11}^{-1} - \mathbf{D}_{11}^{-1} \right) \mathbf{x}_1}{m} \, \cfrac{\mathbf{x}_2^H \left( \mathbf{M}_{22}^{-1} - \mathbf{D}_{22}^{-1} \right) \mathbf{x}_2}{m}}
\end{equation*}
with $\mathbf{D}^{-1} = \mathbf{M}^{-1} \mathbf{P} \left( \mathbf{P}^H \mathbf{M}^{-1} \mathbf{P} \right) ^{-1} \mathbf{P}^H \mathbf{M}^{-1}$.
\end{prop}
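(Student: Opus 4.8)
The plan is to perform the three maximizations in closed form and then show that, evaluated at their optima, every factor except the scale estimates cancels between the numerator and the denominator. First I would eliminate $\boldsymbol{\alpha}$: for a fixed $\widetilde{\boldsymbol{\Sigma}}$, maximizing $p_{\mathbf{x}}(\mathbf{x};\boldsymbol{\alpha},\widetilde{\boldsymbol{\Sigma}},H_1)$ over $\boldsymbol{\alpha}$ is a weighted least-squares problem with minimizer $\hat{\boldsymbol{\alpha}}=(\mathbf{P}^H\widetilde{\mathbf{C}}^{-1}\mathbf{P})^{-1}\mathbf{P}^H\widetilde{\mathbf{C}}^{-1}\mathbf{x}$, so the $H_1$ exponent collapses to $\mathbf{x}^H\bigl(\widetilde{\mathbf{C}}^{-1}-\widetilde{\mathbf{C}}^{-1}\mathbf{P}(\mathbf{P}^H\widetilde{\mathbf{C}}^{-1}\mathbf{P})^{-1}\mathbf{P}^H\widetilde{\mathbf{C}}^{-1}\bigr)\mathbf{x}$.

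The crucial observation, and the step that makes everything tractable, is that the diagonal scale matrix can be pulled through the projector. Writing $\widetilde{\boldsymbol{\Sigma}}=\mathbf{S}\otimes\mathbf{I}_m$ with $\mathbf{S}=\mathrm{diag}(\tilde{\sigma}_1,\tilde{\sigma}_2)$, the block-diagonal form of $\mathbf{P}$ yields $\widetilde{\boldsymbol{\Sigma}}^{-1}\mathbf{P}=\mathbf{P}\,\mathbf{S}^{-1}$; inserting $\widetilde{\mathbf{C}}^{-1}=\widetilde{\boldsymbol{\Sigma}}^{-1}\mathbf{M}^{-1}\widetilde{\boldsymbol{\Sigma}}^{-1}$ into the projector, the factors $\mathbf{S}^{-1}$ and $\mathbf{S}$ cancel and the whole projector term reduces to $\widetilde{\boldsymbol{\Sigma}}^{-1}\mathbf{D}^{-1}\widetilde{\boldsymbol{\Sigma}}^{-1}$. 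The $H_1$ exponent therefore becomes $\mathbf{x}^H\widetilde{\boldsymbol{\Sigma}}^{-1}(\mathbf{M}^{-1}-\mathbf{D}^{-1})\widetilde{\boldsymbol{\Sigma}}^{-1}\mathbf{x}$, which has exactly the same shape as the $H_0$ exponent $\mathbf{x}^H\widetilde{\boldsymbol{\Sigma}}^{-1}\mathbf{M}^{-1}\widetilde{\boldsymbol{\Sigma}}^{-1}\mathbf{x}$ with the kernel $\mathbf{M}^{-1}$ replaced by $\mathbf{M}^{-1}-\mathbf{D}^{-1}$. I would thus solve the scale maximization once, for a generic Hermitian kernel $\mathbf{K}$ sandwiched between the two $\widetilde{\boldsymbol{\Sigma}}^{-1}$, and then specialize to $\mathbf{K}=\mathbf{M}^{-1}$ for the denominator and $\mathbf{K}=\mathbf{M}^{-1}-\mathbf{D}^{-1}$ for the numerator.

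For that generic problem I would expand the quadratic form as $a_1/\tilde{\sigma}_1^2+a_2/\tilde{\sigma}_2^2+2b/(\tilde{\sigma}_1\tilde{\sigma}_2)$, where $a_1=\mathbf{x}_1^H\mathbf{K}_{11}\mathbf{x}_1$, $a_2=\mathbf{x}_2^H\mathbf{K}_{22}\mathbf{x}_2$ and $b=\mathrm{Re}(\mathbf{x}_1^H\mathbf{K}_{12}\mathbf{x}_2)$ are read off the $m\times m$ blocks of $\mathbf{K}$ (for $\mathbf{K}=\mathbf{M}^{-1}$ these are the blocks $\mathbf{M}_{11}^{-1}$, $\mathbf{M}_{12}^{-1}$, $\mathbf{M}_{22}^{-1}$ of the statement), while the log-determinant contributes $2m(\log\tilde{\sigma}_1+\log\tilde{\sigma}_2)$. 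Annihilating the two partial derivatives gives the coupled stationarity equations $m\tilde{\sigma}_1^2=a_1+b\,\tilde{\sigma}_1/\tilde{\sigma}_2$ and $m\tilde{\sigma}_2^2=a_2+b\,\tilde{\sigma}_2/\tilde{\sigma}_1$. This nonlinear system is the main obstacle. I would decouple it by multiplying the equations, which gives $(m\tilde{\sigma}_1^2-a_1)(m\tilde{\sigma}_2^2-a_2)=b^2$, and by dividing them, which gives $\tilde{\sigma}_1^2/\tilde{\sigma}_2^2=a_1/a_2$; together these produce the closed form for the product $\tilde{\sigma}_1\tilde{\sigma}_2$ in terms of $a_1,a_2,b$, reproducing the right-hand sides displayed in the statement. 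Existence of the estimate follows from $|b|\le\sqrt{a_1a_2}$, itself a consequence of the positive-definiteness of $\mathbf{K}$.

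To finish I would assemble the ratio. Substituting the stationarity relations back shows that $a_1/\tilde{\sigma}_1^2=a_2/\tilde{\sigma}_2^2=m-b/(\tilde{\sigma}_1\tilde{\sigma}_2)$, so each exponent equals the constant $2m$ at its optimum, independently of the data; hence the factor $\mathrm{e}^{-2m}/(\pi^{2m}|\mathbf{M}|)$ is shared by numerator and denominator and cancels. What remains is the ratio of the $(\tilde{\sigma}_1\tilde{\sigma}_2)$-products raised to the common power $2m$, and since $L_G$ enters the test only through a threshold, that monotone power is immaterial and may be absorbed, leaving the stated detector. For the numerator I would finally note that $\mathbf{M}^{-1}-\mathbf{D}^{-1}$ is positive semidefinite — its $\mathbf{M}^{1/2}$-congruence equals $\mathbf{I}-\Pi$ with $\Pi$ the orthogonal projector onto $\mathrm{range}(\mathbf{M}^{-1/2}\mathbf{P})$ — so the $H_1$ coefficients are nonnegative and the associated scale estimate is well defined.
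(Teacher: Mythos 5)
Your proposal is correct and follows essentially the same route as the paper's Appendix A: elimination of $\boldsymbol{\alpha}$ by weighted least squares, the key cancellation $\widetilde{\boldsymbol{\Sigma}}^{-1}\mathbf{P} = \mathbf{P}\,\mathbf{S}^{-1}$ that reduces the $H_1$ exponent to the kernel $\mathbf{M}^{-1}-\mathbf{D}^{-1}$, the same coupled stationarity equations resolved by the same ratio/product decoupling, and the observation that each exponent equals the constant $2m$ at its optimum so that only the ratio of scale products survives (up to a monotone power absorbed in the threshold). Your ``generic kernel'' packaging is simply a cleaner formalization of the paper's remark that the $H_1$ scale estimation ``is identical'' to the $H_0$ derivation, and your added remarks on existence ($|b|\le\sqrt{a_1a_2}$) and on the positive semidefiniteness of $\mathbf{M}^{-1}-\mathbf{D}^{-1}$ are correct refinements that the paper leaves implicit.
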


\begin{proof}
See Appendix \ref{sec:appendixA} for a step-by-step derivation and Appendix \ref{sec:appendixB} for some interesting equivalences.
\end{proof}

As $\mathbf{x}_{i} = \sqrt{\tau_i} \, \mathbf{c}_{i}$ under $H_0$, it is easily shown that this detector is texture independent (i.e. it has the \emph{texture-CFAR} property). This detection test will be called \emph{M-NMF-G} in the following.

\subsubsection{Rao test}
\label{subsubsec:rao}

The Rao test is obtained by exploiting the asymptotic efficiency of the ML estimate and expanding the likelihood ratio in the neighborhood of the estimated parameters \cite{Aubry2016}. A traditional approach for complex unknown parameters is to form a corresponding real-valued parameter vector and then use the real Rao test \cite{Kay1998, Kay2016}. Specifically, rewriting the detection problem \eqref{eq:test_hypotheses_bin} as:

\begin{equation}
\left\lbrace
	\begin{array}{ll}
	H_0 : \boldsymbol{\xi}_R = \mathbf{0}, \boldsymbol{\xi}_S \\
	H_1 : \boldsymbol{\xi}_R \neq \mathbf{0}, \boldsymbol{\xi}_S \, ,
	\end{array}
\right.\
\label{eq:test_hypotheses_bin_rao}
\end{equation}
where $\boldsymbol{\xi}_R = \left[ \mathrm{Re} \left( \boldsymbol{\alpha} \right) ^T \, \mathrm{Im} \left( \boldsymbol{\alpha} \right) ^T \right]^T$ is a $4 \times 1$ parameter vector and $\boldsymbol{\xi}_S = \left[ \tilde{\sigma}_1 \, \tilde{\sigma}_2 \right]^T$ is a $2 \times 1$ vector of nuisance parameters, the  Rao test for the problem \eqref{eq:test_hypotheses_bin_rao} is:

\begin{equation}
    L_R(\mathbf{x}) =
    \left. \cfrac{\partial \ln p_{\mathbf{x}}(\mathbf{x}; \boldsymbol{\xi}_R, \boldsymbol{\xi}_S)}{\partial \boldsymbol{\xi}_R} \right|_{\begin{matrix}
    \boldsymbol{\xi}_R = \hat{\boldsymbol{\xi}}_{R_0} \\
    \boldsymbol{\xi}_S = \hat{\boldsymbol{\xi}}_{S_0}
    \end{matrix}} ^T     
    \left[ \mathbf{I}^{-1} (\hat{\boldsymbol{\xi}}_{R_0}, \hat{\boldsymbol{\xi}}_{S_0}) \right]_{\boldsymbol{\xi}_R \boldsymbol{\xi}_R}   
    \left. \cfrac{\partial \ln p_{\mathbf{x}}(\mathbf{x}; \boldsymbol{\xi}_R, \boldsymbol{\xi}_S)}{\partial \boldsymbol{\xi}_R} \right|_{\begin{matrix}
    \boldsymbol{\xi}_R = \hat{\boldsymbol{\xi}}_{R_0} \\
    \boldsymbol{\xi}_S = \hat{\boldsymbol{\xi}}_{S_0}
\end{matrix}} \, .
    \label{eq:rao_formula}
\end{equation}
The PDF $p_{\mathbf{x}}(\mathbf{x}; \boldsymbol{\xi}_R, \boldsymbol{\xi}_S)$ is given in \eqref{eq:PDF_2_ss_antennes} and parametrized by $\boldsymbol{\xi}_R$ and $\boldsymbol{\xi}_S$. $\hat{\boldsymbol{\xi}}_{R_0}$ and $\hat{\boldsymbol{\xi}}_{S_0}$ are the ML estimates of $\boldsymbol{\xi}_{R}$ and $\boldsymbol{\xi}_{S}$ under $H_0$. $\mathbf{I}(\boldsymbol{\xi}_R, \boldsymbol{\xi}_S)$ is the Fisher Information Matrix that can be partitioned as:
\begin{equation*}
    \mathbf{I}(\boldsymbol{\xi}_R, \boldsymbol{\xi}_S) = \begin{bmatrix} 
    \mathbf{I}_{\boldsymbol{\xi}_R \boldsymbol{\xi}_R}(\boldsymbol{\xi}_R, \boldsymbol{\xi}_S) & \mathbf{I}_{\boldsymbol{\xi}_R \boldsymbol{\xi}_S}(\boldsymbol{\xi}_R, \boldsymbol{\xi}_S)\\ 
    \mathbf{I}_{\boldsymbol{\xi}_S \boldsymbol{\xi}_R}(\boldsymbol{\xi}_R, \boldsymbol{\xi}_S) & \mathbf{I}_{\boldsymbol{\xi}_S \boldsymbol{\xi}_S}(\boldsymbol{\xi}_R, \boldsymbol{\xi}_S)
    \end{bmatrix} \, ,
\end{equation*}
and we have:
\begin{equation}
    \left[ \mathbf{I}^{-1} (\boldsymbol{\xi}_R, \boldsymbol{\xi}_S) \right] _{\boldsymbol{\xi}_R \boldsymbol{\xi}_R} =
    \left( 
    \mathbf{I}_{\boldsymbol{\xi}_R \boldsymbol{\xi}_R} (\boldsymbol{\xi}_R, \boldsymbol{\xi}_S) - 
    \mathbf{I}_{\boldsymbol{\xi}_R \boldsymbol{\xi}_S} (\boldsymbol{\xi}_R, \boldsymbol{\xi}_S) 
    \mathbf{I}_{\boldsymbol{\xi}_S \boldsymbol{\xi}_S}^{-1} (\boldsymbol{\xi}_R, \boldsymbol{\xi}_S) \,
    \mathbf{I}_{\boldsymbol{\xi}_S \boldsymbol{\xi}_R} (\boldsymbol{\xi}_R, \boldsymbol{\xi}_S) 
   \right) ^{-1} \, .
\end{equation}
The following proposition can be finally stated:
\begin{prop} 
The Rao test is given by:
\begin{equation}
L_R(\mathbf{x}) = 2 \, \mathbf{x}^H \hat{\widetilde{\mathbf{C}}}_0^{-1} \mathbf{P} \left( \mathbf{P}^H \hat{\widetilde{\mathbf{C}}}_0^{-1} \mathbf{P} \right)^{-1} \mathbf{P}^H \hat{\widetilde{\mathbf{C}}}_0^{-1} \mathbf{x}\, ,
\label{eq:rao_dector}
\end{equation}
where $\hat{\widetilde{\mathbf{C}}}_0 = \hat{\widetilde{\boldsymbol{\Sigma}}}_0 \, \mathbf{M} \, \hat{\widetilde{\boldsymbol{\Sigma}}}_0$ and $\hat{\widetilde{\boldsymbol{\Sigma}}}_0$ is defined in Appendix \ref{sec:appendixA}.
\end{prop}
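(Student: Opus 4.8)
The plan is to evaluate the Rao statistic \eqref{eq:rao_formula} in closed form by exploiting the structure of the model: conditionally on the texture, $\mathbf{x}$ is complex Gaussian whose \emph{mean} $\mathbf{P}\boldsymbol{\alpha}$ carries the signal parameters $\boldsymbol{\xi}_R$, while its \emph{covariance} $\widetilde{\mathbf{C}}$ carries only the nuisance parameters $\boldsymbol{\xi}_S$. First I would compute the score $\partial\ln p_{\mathbf{x}}/\partial\boldsymbol{\xi}_R$ from \eqref{eq:PDF_2_ss_antennes}. Since $\widetilde{\mathbf{C}}$ does not depend on $\boldsymbol{\alpha}$, only the quadratic form $(\mathbf{x}-\mathbf{P}\boldsymbol{\alpha})^H\widetilde{\mathbf{C}}^{-1}(\mathbf{x}-\mathbf{P}\boldsymbol{\alpha})$ contributes. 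Differentiating with respect to $\mathrm{Re}(\boldsymbol{\alpha})$ and $\mathrm{Im}(\boldsymbol{\alpha})$ and stacking, the score equals $2\,[\,\mathrm{Re}(\mathbf{P}^H\widetilde{\mathbf{C}}^{-1}(\mathbf{x}-\mathbf{P}\boldsymbol{\alpha}))^T\ \ \mathrm{Im}(\mathbf{P}^H\widetilde{\mathbf{C}}^{-1}(\mathbf{x}-\mathbf{P}\boldsymbol{\alpha}))^T\,]^T$. Evaluating at the $H_0$ point of \eqref{eq:test_hypotheses_bin_rao}, namely $\hat{\boldsymbol{\xi}}_{R_0}=\mathbf{0}$ (so $\boldsymbol{\alpha}=\mathbf{0}$) and $\boldsymbol{\xi}_S=\hat{\boldsymbol{\xi}}_{S_0}$ (so $\widetilde{\mathbf{C}}=\hat{\widetilde{\mathbf{C}}}_0$), the score becomes the real stacking of $\mathbf{g}:=\mathbf{P}^H\hat{\widetilde{\mathbf{C}}}_0^{-1}\mathbf{x}\in\mathbb{C}^2$, scaled by $2$.

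Next I would assemble the Fisher information through the Slepian--Bangs formula for a parametrized complex Gaussian. The crucial observation is the classical mean/covariance orthogonality: because $\boldsymbol{\xi}_R$ enters only the mean and $\boldsymbol{\xi}_S$ only the covariance, the cross block $\mathbf{I}_{\boldsymbol{\xi}_R\boldsymbol{\xi}_S}$ vanishes. The Schur-complement expression for $[\mathbf{I}^{-1}]_{\boldsymbol{\xi}_R\boldsymbol{\xi}_R}$ then collapses to $\mathbf{I}_{\boldsymbol{\xi}_R\boldsymbol{\xi}_R}^{-1}$, so the nuisance information never enters the statistic; this is precisely what yields the compact projector form. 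Retaining only the mean-derivative term of Slepian--Bangs and setting $\mathbf{B}:=\mathbf{P}^H\hat{\widetilde{\mathbf{C}}}_0^{-1}\mathbf{P}$ (Hermitian, $2\times2$), one gets
\[
\mathbf{I}_{\boldsymbol{\xi}_R\boldsymbol{\xi}_R}=2\begin{bmatrix}\mathrm{Re}(\mathbf{B}) & -\mathrm{Im}(\mathbf{B})\\ \mathrm{Im}(\mathbf{B}) & \mathrm{Re}(\mathbf{B})\end{bmatrix},
\]
which is the standard real representation of the complex matrix $2\mathbf{B}$.

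Finally I would recombine the pieces. The real representation $\mathbf{A}\mapsto\bigl[\begin{smallmatrix}\mathrm{Re}(\mathbf{A}) & -\mathrm{Im}(\mathbf{A})\\ \mathrm{Im}(\mathbf{A}) & \mathrm{Re}(\mathbf{A})\end{smallmatrix}\bigr]$ respects products, hence inverses, so $\mathbf{I}_{\boldsymbol{\xi}_R\boldsymbol{\xi}_R}^{-1}$ is the real representation of $\tfrac12\mathbf{B}^{-1}$; moreover the quadratic form of two real-stacked complex vectors against such a block matrix equals the real part of the associated complex sesquilinear form. Applying this with the score's stacking of $\mathbf{g}$ gives $L_R=2\,\mathrm{Re}(\mathbf{g}^H\mathbf{B}^{-1}\mathbf{g})=2\,\mathbf{g}^H\mathbf{B}^{-1}\mathbf{g}$ (real, since $\mathbf{B}^{-1}$ is Hermitian). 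Substituting $\mathbf{g}=\mathbf{P}^H\hat{\widetilde{\mathbf{C}}}_0^{-1}\mathbf{x}$ and $\mathbf{B}=\mathbf{P}^H\hat{\widetilde{\mathbf{C}}}_0^{-1}\mathbf{P}$ reproduces exactly \eqref{eq:rao_dector}.

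I expect the main obstacle to be bookkeeping rather than conceptual depth. One must track the scalar factors introduced by the real-parameter convention — the factor $2$ carried on each side of the quadratic form by the score, against the factor $1/2$ carried by $\mathbf{I}_{\boldsymbol{\xi}_R\boldsymbol{\xi}_R}^{-1}$ — and verify that they combine to the single $2$ in \eqref{eq:rao_dector}, together with the complex-to-real isomorphism identity used in the last step. The orthogonality $\mathbf{I}_{\boldsymbol{\xi}_R\boldsymbol{\xi}_S}=\mathbf{0}$ is what keeps the derivation short, and establishing it — equivalently, showing that the covariance-derivative contributions decouple from the mean-derivative ones — is the one structural step worth spelling out explicitly.
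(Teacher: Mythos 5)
Your proof is correct and follows the same overall route as the paper's Appendix \ref{sec:appendixC}: compute the score with respect to $\boldsymbol{\xi}_R$ (the paper quotes it from \cite{Kay1993}, eq.\ (15.60)), evaluate it at $\boldsymbol{\xi}_R=\mathbf{0}$, $\boldsymbol{\xi}_S=\hat{\boldsymbol{\xi}}_{S_0}$, invoke mean/covariance parameter orthogonality $\mathbf{I}_{\boldsymbol{\xi}_R\boldsymbol{\xi}_S}=\mathbf{0}$ so that $\left[\mathbf{I}^{-1}\right]_{\boldsymbol{\xi}_R\boldsymbol{\xi}_R}=\mathbf{I}_{\boldsymbol{\xi}_R\boldsymbol{\xi}_R}^{-1}$, and recombine. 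The one place you genuinely diverge is the treatment of the block $\mathbf{I}_{\boldsymbol{\xi}_R\boldsymbol{\xi}_R}$, and your version is the more careful one. The paper declares the off-diagonal blocks $2\,\mathrm{Re}\left[\pm\mathrm{i}\,\mathbf{P}^H\mathbf{C}^{-1}\mathbf{P}\right]$ to be zero, and at the end factorizes $\left(\mathbf{P}^H\mathbf{C}^{-1}\mathbf{P}\right)^{-1}$ inside the real and imaginary parts; both steps presuppose that $\mathbf{B}=\mathbf{P}^H\mathbf{C}^{-1}\mathbf{P}$ is a \emph{real} matrix. That is not automatic here: with correlated arrays ($\mathbf{M}_{12}\neq\mathbf{0}$) the off-diagonal entry of this $2\times 2$ Hermitian matrix, essentially $\mathbf{p}_1^H\left[\mathbf{C}^{-1}\right]_{12}\mathbf{p}_2$, is complex in general, so $\mathrm{Im}(\mathbf{B})$ need not vanish. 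You instead retain the full real representation of $2\mathbf{B}$ (with off-diagonal blocks $\mp\,\mathrm{Im}(\mathbf{B})$), use that this representation is multiplicative — so $\mathbf{I}_{\boldsymbol{\xi}_R\boldsymbol{\xi}_R}^{-1}$ is the representation of $\tfrac{1}{2}\mathbf{B}^{-1}$ — and the identity that the real quadratic form of stacked vectors equals the real part of the complex sesquilinear form, giving $L_R=2\,\mathrm{Re}\left(\mathbf{g}^H\mathbf{B}^{-1}\mathbf{g}\right)=2\,\mathbf{g}^H\mathbf{B}^{-1}\mathbf{g}$ with no reality assumption. Your route therefore closes a small gap in the paper's derivation (which is exact only when $\mathrm{Im}(\mathbf{B})=\mathbf{0}$, yet asserts the general result), while arriving at exactly the statistic \eqref{eq:rao_dector}, factors of $2$ included; the paper's route is shorter in the special case it implicitly assumes.
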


\begin{proof}
See Appendix \ref{sec:appendixC}.
\end{proof}

As for the GLRT, \eqref{eq:rao_dector} is \emph{texture-CFAR}. This detector will be referred to as \emph{M-NMF-R} in the sequel.

\subsection{Covariance estimation and adaptive detectors (step-2)}
\label{subsec:step_2}

In practice, the noise covariance matrix is unknown and estimated using the $K$ available i.i.d. signal-free secondary data.

In Gaussian environment, in which PDFs are given by \eqref{eq:PDF_2_ss_antennes} with $\sigma_i = 1$ and $\tau_i = 1$, the MLE of $\mathbf{M}$ is the well-known Sample Covariance Matrix (SCM):
\begin{equation}
	\widehat{\mathbf{M}}_{SCM} = \cfrac{1}{K} \sum_{k=1}^K \mathbf{x}_k \, \mathbf{x}_k^H\, ,
\label{eq:cov_SCM}
\end{equation}
which is an unbiased and minimum variance estimator. 

In the presence of outliers or a heavy-tailed distribution (as modeled by a mixture of scaled Gaussian), this estimator is no longer optimal or robust. This leads to a strong performance degradation.

\begin{prop} 
In MSG environment the MLE of $\mathbf{M}$ is given by:
\begin{equation}
\widehat{\mathbf{M}}_{\text{2TYL}} = \cfrac{1}{K} \sum_{k=1}^K \widehat{\mathbf{T}}_k^{-1} \mathbf{x}_k \mathbf{x}_k^H \widehat{\mathbf{T}}_k^{-1}\, ,
\label{eq:cov_Tyler}
\end{equation}
where $\widehat{\mathbf{T}}_k = \begin{bmatrix} \sqrt{\hat{\tau}_{1_k}} & 0 \\ 0 & \sqrt{\hat{\tau}_{2_k}} \\ \end{bmatrix} \otimes \mathbf{I}_m$, $\hat{\tau}_{1_k} = t_1 + \sqrt{\cfrac{t_1}{t_2}} \, t_{12}$, $\hat{\tau}_{2_k} = t_2 + \sqrt{\cfrac{t_2}{t_1}} \, t_{12}$ and $t_1 = \cfrac{\mathbf{x}_{1,k}^H \widehat{\mathbf{M}}_{11}^{-1} \mathbf{x}_{1,k}}{m}$, $t_2 = \cfrac{\mathbf{x}_{2,k}^H \widehat{\mathbf{M}}_{22}^{-1} \mathbf{x}_{2,k}}{m}$, $t_{12} = \cfrac{\text{Re} \left( \mathbf{x}_{1,k}^H \widehat{\mathbf{M}}_{12}^{-1} \mathbf{x}_{2,k} \right)}{m}$.
\end{prop}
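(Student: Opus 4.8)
The plan is to treat the per-snapshot textures $\{\tau_{1,k},\tau_{2,k}\}_{k=1}^K$ as deterministic unknowns and jointly maximise the secondary-data likelihood over them and over $\mathbf{M}$. Writing $\mathbf{x}_k\overset{d}{=}\mathbf{T}_k\,\mathbf{c}_k$ with $\mathbf{c}_k\sim\mathcal{CN}(\mathbf{0},\mathbf{M})$ and $\mathbf{T}_k=\mathrm{diag}\!\left(\sqrt{\tau_{1,k}}\,\mathbf{I}_m,\sqrt{\tau_{2,k}}\,\mathbf{I}_m\right)$, the conditional covariance is $\mathbf{T}_k\mathbf{M}\mathbf{T}_k$ and $\lvert\mathbf{T}_k\mathbf{M}\mathbf{T}_k\rvert=\tau_{1,k}^m\tau_{2,k}^m\lvert\mathbf{M}\rvert$, so that, up to an additive constant, the negative log-likelihood of the $K$ i.i.d.\ snapshots is
\[
-\ell=K\ln\lvert\mathbf{M}\rvert+m\sum_{k=1}^K\bigl(\ln\tau_{1,k}+\ln\tau_{2,k}\bigr)+\sum_{k=1}^K\mathbf{x}_k^H\mathbf{T}_k^{-1}\mathbf{M}^{-1}\mathbf{T}_k^{-1}\mathbf{x}_k .
\]
Expanding the last quadratic form block-wise and using that $\mathbf{M}^{-1}$ is Hermitian produces exactly $\tfrac{m\,t_1}{\tau_{1,k}}+\tfrac{m\,t_2}{\tau_{2,k}}+\tfrac{2m\,t_{12}}{\sqrt{\tau_{1,k}\tau_{2,k}}}$, with $t_1,t_2,t_{12}$ the block quantities of the statement.

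Next I would write the two families of stationarity conditions. Differentiating $-\ell$ with respect to the Hermitian matrix $\mathbf{M}$ at fixed textures, using the standard complex matrix-derivative identities $\partial\ln\lvert\mathbf{M}\rvert\propto\mathbf{M}^{-1}$ and $\partial\,\mathrm{tr}(\mathbf{M}^{-1}\mathbf{R})\propto-\mathbf{M}^{-1}\mathbf{R}\mathbf{M}^{-1}$, collapses to $\mathbf{M}^{-1}=\mathbf{M}^{-1}\mathbf{R}\mathbf{M}^{-1}$ with $\mathbf{R}=\tfrac1K\sum_k\mathbf{T}_k^{-1}\mathbf{x}_k\mathbf{x}_k^H\mathbf{T}_k^{-1}$, i.e.\ the announced matrix fixed point $\mathbf{M}=\tfrac1K\sum_k\mathbf{T}_k^{-1}\mathbf{x}_k\mathbf{x}_k^H\mathbf{T}_k^{-1}$. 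Differentiating $-\ell$ with respect to each positive scalar $\tau_{i,k}$ at fixed $\mathbf{M}$ (ordinary real calculus), setting to zero and clearing the $\tau^2$ denominators, gives the coupled pair
\[
\tau_{1,k}=t_1+t_{12}\sqrt{\tfrac{\tau_{1,k}}{\tau_{2,k}}},\qquad \tau_{2,k}=t_2+t_{12}\sqrt{\tfrac{\tau_{2,k}}{\tau_{1,k}}} .
\]

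The step I expect to be the main obstacle is solving this coupled system explicitly rather than leaving it as a second nested fixed point. Setting $\rho=\sqrt{\tau_{1,k}/\tau_{2,k}}$ and inserting $\tau_{1,k}=t_1+t_{12}\rho$ and $\tau_{2,k}=t_2+t_{12}/\rho$ into the defining identity $\rho^2=\tau_{1,k}/\tau_{2,k}$, then clearing denominators, makes the $t_{12}$ terms cancel and reduces the apparent cubic to $\rho^3 t_2=\rho\,t_1$, so that $\rho=\sqrt{t_1/t_2}$. Substituting this ratio back yields the closed forms $\hat{\tau}_{1_k}=t_1+\sqrt{t_1/t_2}\,t_{12}$ and $\hat{\tau}_{2_k}=t_2+\sqrt{t_2/t_1}\,t_{12}$ of the statement, hence $\widehat{\mathbf{T}}_k$. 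Finally I would observe that the two sets of conditions are coupled, since $\widehat{\mathbf{T}}_k$ is built from $t_1,t_2,t_{12}$, which themselves depend on $\widehat{\mathbf{M}}$: the estimator is therefore defined implicitly, and plugging the closed-form $\widehat{\mathbf{T}}_k$ into the matrix fixed point gives precisely \eqref{eq:cov_Tyler}. A separate contraction/existence argument in the spirit of the classical scalar-texture Tyler estimator would be required to guarantee the fixed point is attained and unique up to scale, but the form claimed follows directly from the stationarity analysis above.
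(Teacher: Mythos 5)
Your proof is correct and follows essentially the same route as the paper's Appendix~\ref{sec:appendixD}: a conditionally Gaussian likelihood with deterministic per-snapshot textures, stationarity in the textures yielding a coupled pair of equations whose positive solution gives the closed forms $\hat{\tau}_{1_k}$ and $\hat{\tau}_{2_k}$, and stationarity in $\mathbf{M}$ yielding the fixed point \eqref{eq:cov_Tyler}. The only cosmetic difference is that you differentiate directly with respect to the scalars $\tau_{i,k}$ after expanding the quadratic form block-wise, whereas the paper takes a matrix derivative with respect to $\mathbf{T}_k^{-1}$ and reduces to the same scalar system by block-trace manipulations borrowed from its Appendix~\ref{sec:appendixA}; like you, the paper leaves existence/uniqueness of the joint fixed point to a numerical convergence study rather than a formal argument.
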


\begin{proof}
The demonstration is provided in Appendix \ref{sec:appendixD}.
\end{proof}

A key point is that this  estimator is independent of the textures, i.e. the power variations, on each array. It can be thought of as a \emph{multi-texture} generalization of \cite{Pascal2008}.

From a practical point of view $\widehat{\mathbf{M}}_{\text{2TYL}}$ is the solution of the recursive algorithm:
\begin{align}
	\widehat{\mathbf{T}}_k^{(n)} & = \begin{bmatrix} \sqrt{\hat{\tau}_{1_k}^{(n)}} & 0 \\ 0 & \sqrt{\hat{\tau}_{2_k}^{(n)}} \\ \end{bmatrix} \otimes \mathbf{I}_m \label{eq:Tyler_T_rec} \, ,\\
	\widehat{\mathbf{M}}_{\text{2TYL}}^{(n)} & = \cfrac{1}{K} \sum_{k=1}^K \left( \widehat{\mathbf{T}}_k^{(n-1)} \right)^{-1} \mathbf{x}_k \mathbf{x}_k^H \left( \widehat{\mathbf{T}}_k^{(n-1)} \right)^{-1} \, ,\label{eq:Tyler_M_rec}
\end{align}
where $n \in \mathbb{N}$ is the iteration number, whatever the initialization $\widehat{\mathbf{M}}_{\text{2TYL}}^{(0)}$. The convergence of recursive equations \eqref{eq:Tyler_T_rec} and \eqref{eq:Tyler_M_rec} in the estimation of $\widehat{\mathbf{M}}_{\text{2TYL}}$ is illustrated in Figure \ref{fig:conv_Tyler_MIMO} where we plot the relative error between two successive estimates of the covariance matrix as a function of the iteration number for 500 steps and $\widehat{\mathbf{M}}_{\text{2TYL}}^{(0)} = \mathbf{I}_{2m}$. The relative error is given by $\left\| \widehat{\mathbf{M}}_{\text{2TYL}}^{(n)} - \widehat{\mathbf{M}}_{\text{2TYL}}^{(n-1)} \right\| / \left\| \widehat{\mathbf{M}}_{\text{2TYL}}^{(n-1)} \right\|$
where $\left\|.\right\|$ is the Frobenius norm. The relative difference between estimates decreases with the number of iterations. From iteration 60, the accuracy becomes limited by the simulation environment. In practice, it is not necessary to go to this limit, and we notice that from iteration 20 the relative deviation becomes lower than $10^{-6}$. 

\begin{figure}[htbp]
	\centering
	\includegraphics[width = 1\linewidth]{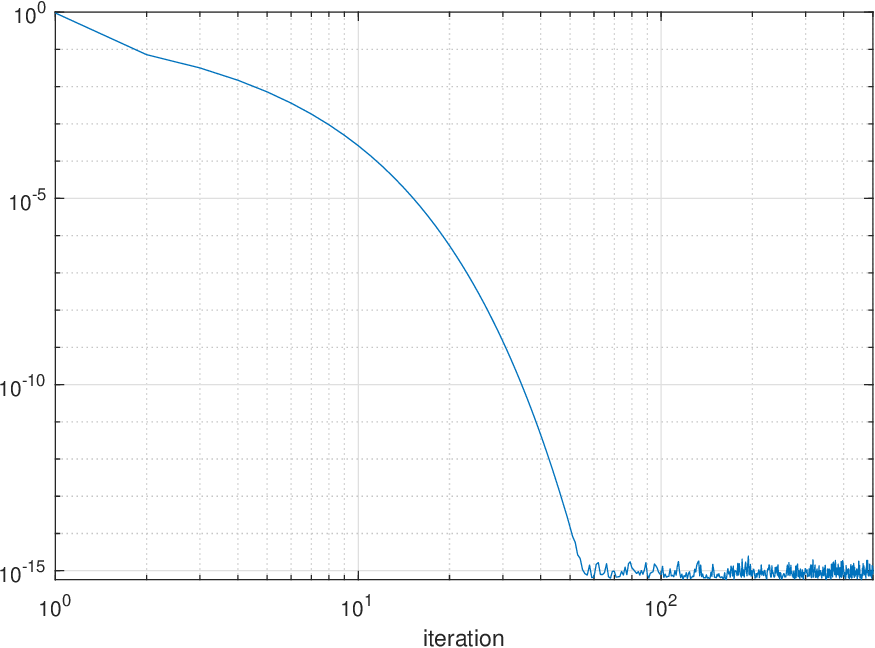}
	\caption{Relative estimation error of $\widehat{\mathbf{M}}_{\text{2TYL}}$ versus the number of iterations $n$ ($\widehat{\mathbf{M}}_{\text{2TYL}}^{(0)} = \mathbf{I}_{2m}$).}
	\label{fig:conv_Tyler_MIMO}
\end{figure}

At last, the adaptive versions of the tests \eqref{eq:glrt_detector} and \eqref{eq:rao_dector} will be simply obtained by replacing the covariance matrix $\mathbf{M}$ by an appropriate estimate: \eqref{eq:cov_SCM} or \eqref{eq:cov_Tyler} according to the environment. Those detectors will be referred as \emph{M-ANMF-G$_{SCM}$}, \emph{M-ANMF-R$_{SCM}$}, \emph{M-ANMF-G$_{TYL}$}, and \emph{M-ANMF-R$_{TYL}$}.

\section{Numerical results on simulated data}
\label{sec:results_simu}

\subsection{Performance assessment}
\label{subsec:perf_assessment}

Two correlation coefficients $\rho_1$ and $\rho_2$ $\left( 0 < \rho_1,\rho_2 < 1 \right)$, are used in the construction of a \emph{speckle} covariance matrix model defined as:
\begin{equation*}
    \left[ \mathbf{M} \right]_{jl} = \beta \, \rho_1^{\lvert j_1-l_1 \rvert} \times \rho_2^{\lvert j_2-l_2 \rvert} \, ,
\end{equation*}
where $j,l \in \left[ 1, \, 2m \right]$ are sensor numbers of coordinates $\left(j_1, j_2\right)$ and $\left(l_1, l_2\right)$ respectively and $\beta$ is a scale factor.
Thus, denoting $\mathbf{M} = \begin{bmatrix} \mathbf{M}_{11} & \mathbf{M}_{12} \\ \mathbf{M}_{21} & \mathbf{M}_{22} \end{bmatrix}$, $\mathbf{M}_{11}$ and $\mathbf{M}_{22}$ are the covariance matrices of array 1 and 2 and $\mathbf{M}_{12}=\mathbf{M}_{21}^H$ is a cross-correlation block.
In the FLS context, the block $\mathbf{M}_{11}$ is weakly correlated and Toeplitz structured (close to the identity matrix). $\mathbf{M}_{22}$ is also Toeplitz but more strongly correlated (due to the wider transmission beam). The cross-correlation blocks could be considered null under the uncorrelated arrays assumption. In our case, these will be different from zero because the arrays cross each other in their central part. This results in the general structure displayed in Figure \ref{fig:covariance}, where we visually show the adequacy of this model with the SCM covariance estimator established on real data.

\begin{figure}[htbp]
	\begin{minipage}{0.25\linewidth}
	\centering
	\includegraphics[width = 1.15\linewidth]{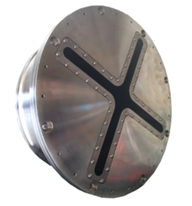}
	\end{minipage}
	\begin{minipage}{0.36\linewidth}
	\centering
	\includegraphics[width = 1\linewidth]{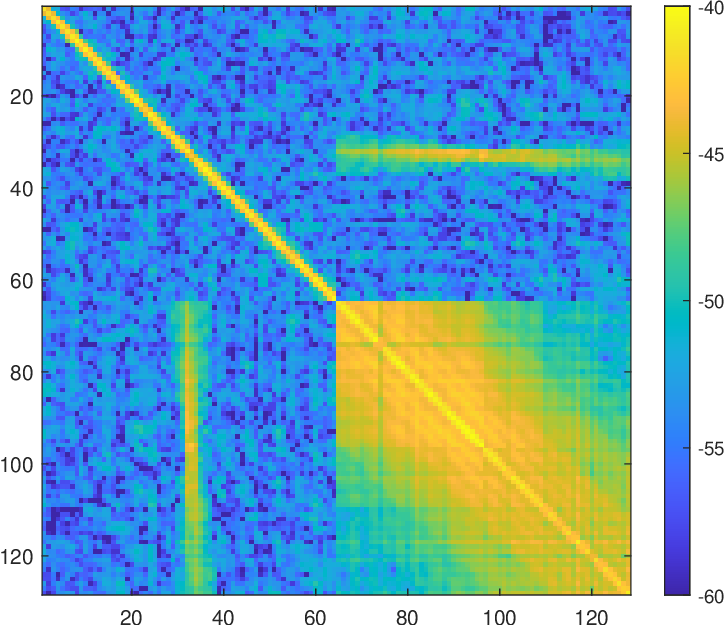}
	\end{minipage}
	\begin{minipage}{0.36\linewidth}
	\centering
	\includegraphics[width = 1\linewidth]{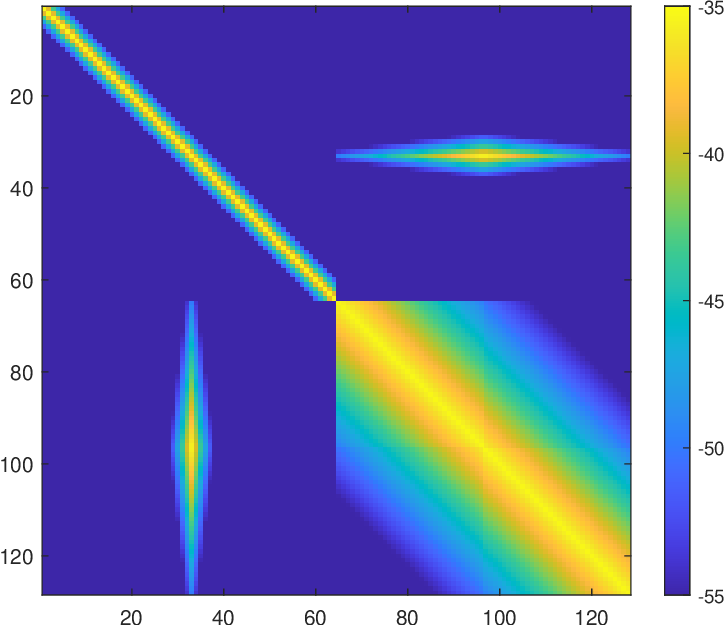}
	\end{minipage}
	\caption{Sonar arrays (left): two uniform linear arrays intersect in their centers. Empirical covariance matrix on real sonar data shown in dB (center): The SCM is estimated on a homogeneous area between 265~m and 328~m. Cross-correlation blocks are non-zero. The covariance matrix model used for the data simulation is shown in dB (right): $\beta = 3 \times 10^{-4}$, $\rho_1=0.4$, $\rho_2=0.9$.}
	\label{fig:covariance}
\end{figure}

We choose $\sigma_i = 1$. $\tau_i=1$ for Gaussian clutter and $\tau_i \sim \mathrm{Gam}(\nu,1/\nu)$ with $\nu = 0.5$ for impulsive non-Gaussian \emph{K}-distributed data (that is, the texture variables follow a gamma distribution with shape parameter 0.5 and scale parameter 2).
The PFA-threshold curves are established on the basis of 1000 realizations of random vectors.
The detection probabilities are statistically estimated by adding a synthetic narrow band far field point target with identical amplitudes on the arrays ($\alpha_1 = \alpha_2$). 10000 Monte Carlo iterations are performed and 150 target amplitudes are evaluated.

\subsection{Benchmark tests}
\label{subsec:benchmark}

The GLRT for a single array in a partially homogeneous Gaussian environment, when $\mathbf{M}_{ii}$ is known, is the \emph{Normalized Matched Filter} \cite{Scharf1994}:
\begin{equation}
\text{NMF} i(\mathbf{x}_i) = \cfrac{\lvert\mathbf{p}_i^H \mathbf{M}_{ii}^{-1} \mathbf{x}_i \rvert ^2}{\left( \mathbf{p}_i^H \mathbf{M}_{ii}^{-1} \mathbf{p}_i \right)\left( \mathbf{x}_i^H \mathbf{M}_{ii}^{-1} \mathbf{x}_i \right)} \, .
\end{equation}
Adaptive versions are obtained by substituting the covariance matrix by a suitable estimate \cite{Ovarlez2016} and will be referred to as ANMF$_{SCM} i$ in Gaussian case, or ANMF$_{TYL} i$ for the non-Gaussian case.

When the two arrays are considered, in the very favorable case of a Gaussian homogeneous environment where the covariance matrix $\mathbf{C}$ is perfectly known, the GLRT is:
\begin{equation}
\text{MIMO-MF}(\mathbf{x}) = \mathbf{x}^H \mathbf{C}^{-1} \mathbf{P} \left( \mathbf{P}^H \mathbf{C}^{-1} \mathbf{P} \right)^{-1} \mathbf{P}^H \mathbf{C}^{-1} \mathbf{x} \, .
\label{eq:MIMO_MF}
\end{equation}
This is the \emph{MIMO Optimum Gaussian Detector} (R-MIMO OGD) in \cite{Chong201004}, which is a multi-array generalization of the \emph{Matched Filter} test. One can note the very strong similarity with \eqref{eq:rao_dector}.
Its adaptive version is MIMO-AMF$_{SCM}$.
It seems useful to specify that this detector is relevant only in a Gaussian and perfectly homogeneous environment. Especially, exactly as in the single-array case \cite{Ollila2012}, the covariance estimator \eqref{eq:cov_Tyler} is defined up to a constant scalar. Detectors \eqref{eq:glrt_detector} and \eqref{eq:rao_dector} are invariant when the covariance is changed by a scale factor. This is a direct result of the partial homogeneity assumption. This is not the case for \eqref{eq:MIMO_MF}, thus the adaptive MIMO-AMF$_{TYL}$ version is not relevant and will not be used in the following.

These tests will be considered as performance bounds for the proposed detectors: M-NMF-G and M-NMF-R when $\mathbf{M}$ is known, M-ANMF-G$_{SCM}$, M-ANMF-R$_{SCM}$, M-ANMF-G$_{TYL}$ and M-ANMF-R$_{TYL}$ otherwise.

\subsection{Performance in Gaussian clutter}
\label{subsec:perf_Gaussian_clutter}

We have shown that all developed detectors are texture-CFAR. Unfortunately, the matrix CFAR property (distribution of the test remains identical whatever the true covariance matrix) is much more difficult to show. Therefore, we propose to perform a study on simulated data to check this matrix CFAR property. Figure \ref{fig:PFA_threshold_curves} experimentally demonstrates the CFAR behavior of the detectors concerning the covariance matrix in a Gaussian environment. On the left-hand side, we represent the false alarm probability of Rao's detector (for known and estimated $\mathbf{M}$) as a function of the detection threshold. On the right, these curves are plotted for the GLRT detector.

\begin{figure}[htbp]
	\centering
	\includegraphics[width = 1\linewidth]{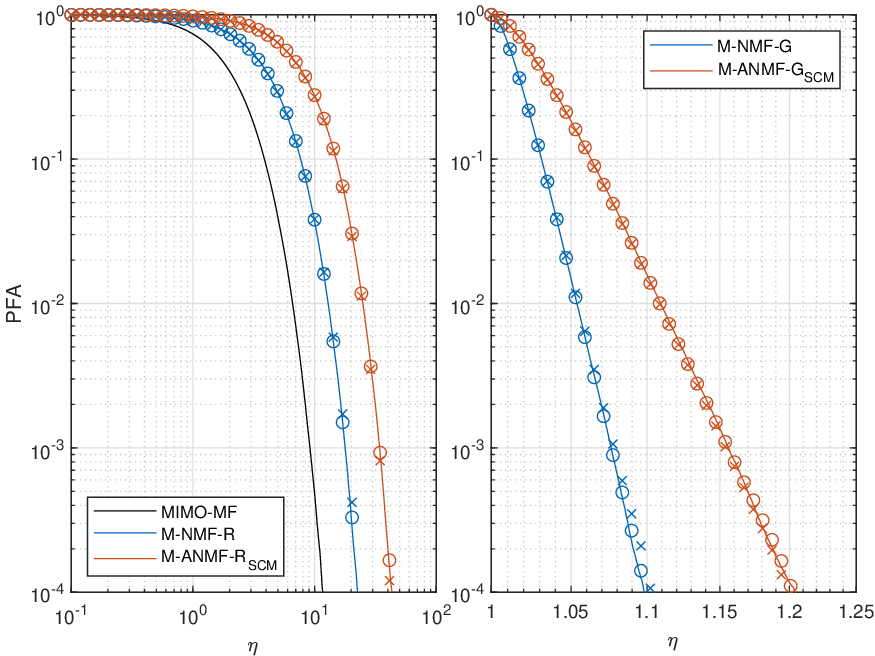}
	\caption{Pfa-threshold curves of the M-(A)NMF-R$_{SCM}$ and M-(A)NMF-G$_{SCM}$ detectors in Gaussian environment. Left: the Rao detector for known $\mathbf{M}$ (blue), for $\mathbf{M}$ estimated from $2 \times 2m$ secondary data (red), the OGD detector or \emph{Matched Filter} (black). Right: The GLRT detector for known $\mathbf{M}$ (blue), and estimated based on $2 \times 2m$ secondary data (red).}
	\label{fig:PFA_threshold_curves}
\end{figure}

The deviation of the adaptive detectors comes from the covariance estimation process: the red curve converges to the blue one when the number of secondary data used for the estimation of $\mathbf{M}$ increases. The markers represent curves for different covariances. In solid line $\beta = 3 \times 10^{-4}$, $\rho_1=0.4$, $\rho_2=0.9$. The cross markers are established from $\beta = 1$, $\rho_1=0.95$, $\rho_2=0.95$. Circular markers are for $\beta = 100$, $\rho_1=0.1$, $\rho_2=0.1$ and null anti-diagonal blocks. For very distinct covariances, the superposition of curves and markers underlines the \emph{matrix-CFAR} property of the detectors.

\begin{figure}[htbp]
	\centering
	\includegraphics[width = 0.45\linewidth]{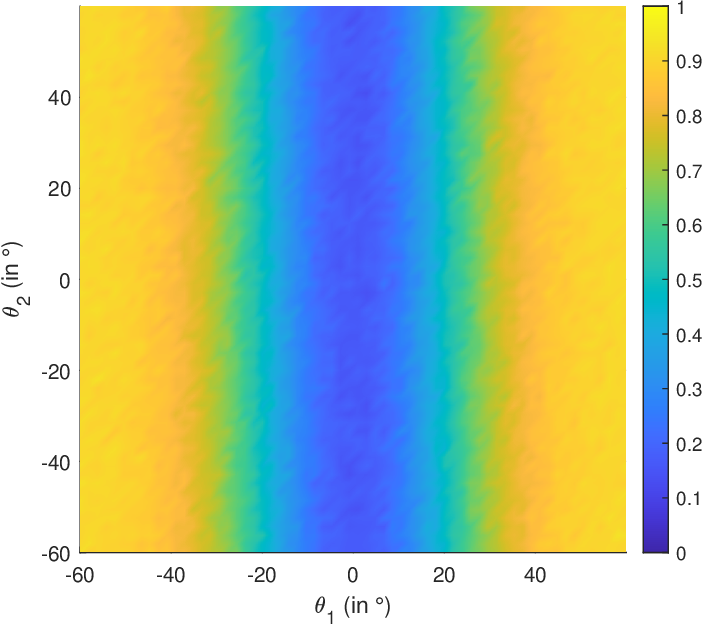}
	\includegraphics[width = 0.45\linewidth]{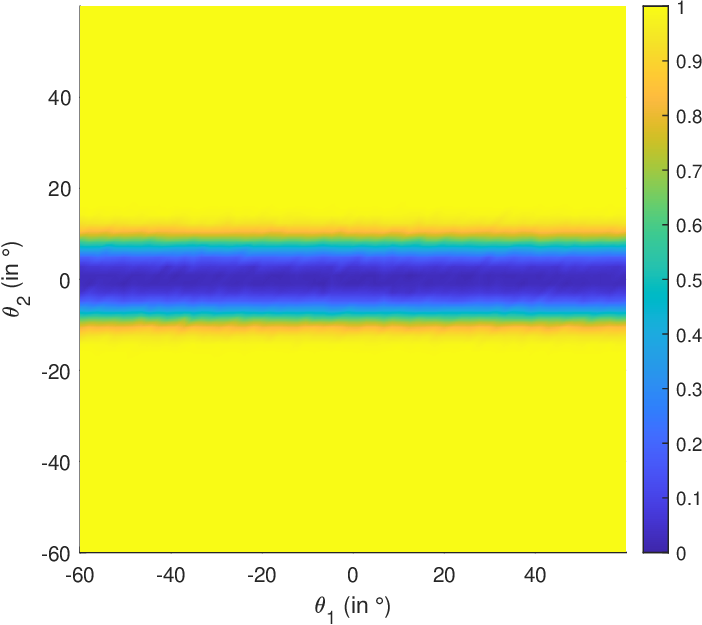}\\
 	\includegraphics[width = 0.45\linewidth]{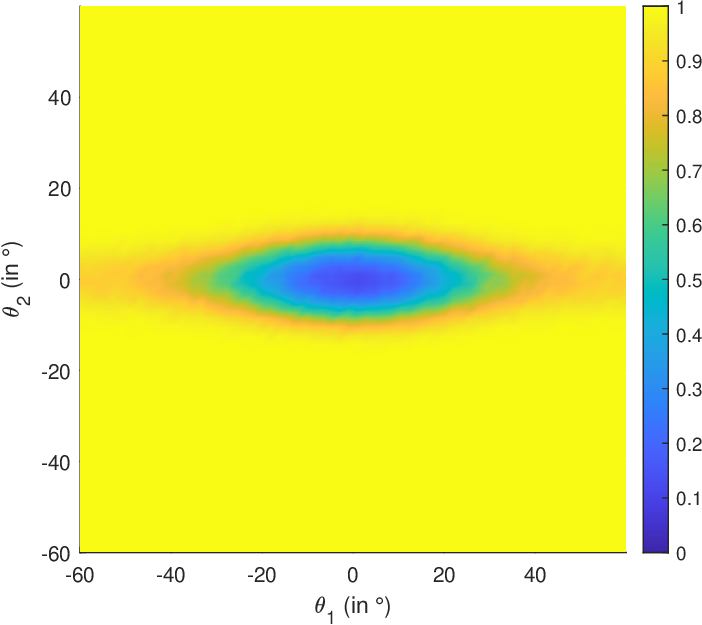}
	\caption{Probability of detection in Gaussian environment for known covariance matrix $\mathbf{M}$ as a function of $\theta_1$ and $\theta_2$ ($\beta = 3 \times 10^{-4}$, $\rho_1=0.4$, $\rho_2=0.9$, $SNR=-12~$dB and $PFA=10^{-2}$). Top left: NMF 1. Top right: NMF 2. Bottom left: M-NMF-R.}
	\label{fig:pd_theta_known_M}
\end{figure}

Figure \ref{fig:pd_theta_known_M} illustrates the value of merging arrays with a cross-shaped geometry. The detection probabilities of NMF 1, NMF 2, and M-NMF-R are plotted as a function of azimuth ($\theta_1$) and elevation ($\theta_2$) angles at fixed SNR. In the first angular dimension, which depends on the array considered, each ULA has a zone of least detectability close to 0\Deg. This is due to the correlated nature of clutter. The decrease in detection probabilities from +/-60\Deg\ to 0\Deg\ is a function of the correlation level ($\rho_1$ or $\rho_2$). In the second dimension, the detection performances are invariant. Thus, the poor detection performance near 0\Deg\ propagates in a whole direction of space: ``vertically'' for NMF 1 and ``horizontally'' for NMF 2. As an illustration, the probability of detection of NMF 1 in $\theta_1=0\Deg$ whatever $\theta_2$ is $PD=0.17$, and the probability of detection of NMF 2 in $\theta_2=0\Deg$ for all $\theta_1$ is 0.03. The M-NMF-R detector spatially minimizes this area of least detectability. In this case, the probability of detection at $\theta_1=0\Deg$ becomes greater than $0.8$ for $\lvert \theta_2 \rvert > 10.5\Deg$ and greater than $0.5$ for $\theta_2=0\Deg$ and $\lvert \theta_1 \rvert > 24\Deg$. The \emph{Rigde clutter} is minimized.

The probability of detection (PD) is plotted as a function of the signal-to-noise ratio (SNR) in Figure \ref{fig:pd_snr_known_M} when $\mathbf{M}$ is known. The MIMO-MF detector (shown in black), which requires a perfect knowledge of the covariance, is logically the most efficient. The single array NMF detectors (NMF 1 and 2, in purple and green) have comparable and the lowest performance. The M-NMF-I detector (in blue) between these curves assumes antenna independence. The proposed M-NMF-G (red) and M-NMF-R (yellow) detectors are both equivalent and superior to the M-NMF-I (0.2~dB at $PD=0.8$) and much more efficient than NMF 1 (2.5~dB) and NMF 2 tests (2~dB). The difference with MIMO-MF is slight, around 0.2~dB at $PD=0.8$. Detection performances by the Rao approach are slightly superior to those obtained by the GLRT.

\begin{figure}[htbp]
	\centering
    \begin{overpic}
        [width = 1\linewidth]{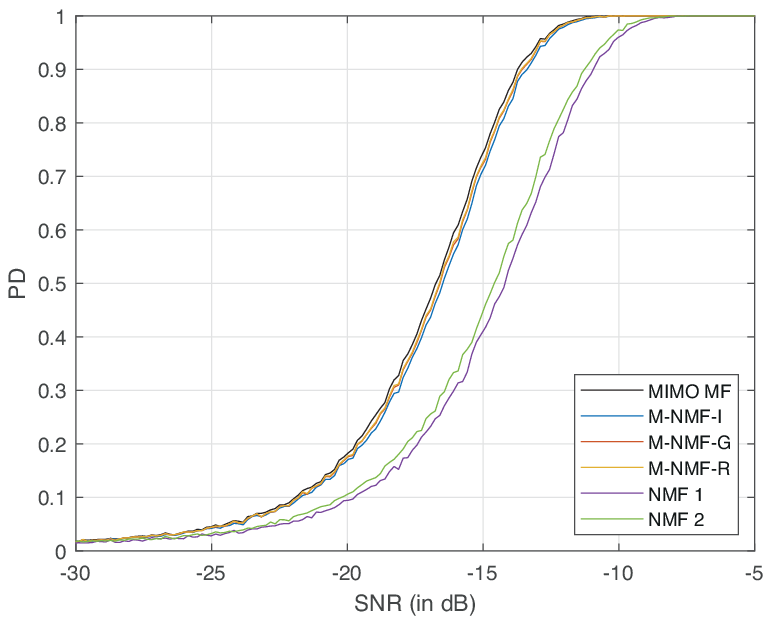}
        \put(12,43){\includegraphics[width = 0.44\linewidth]{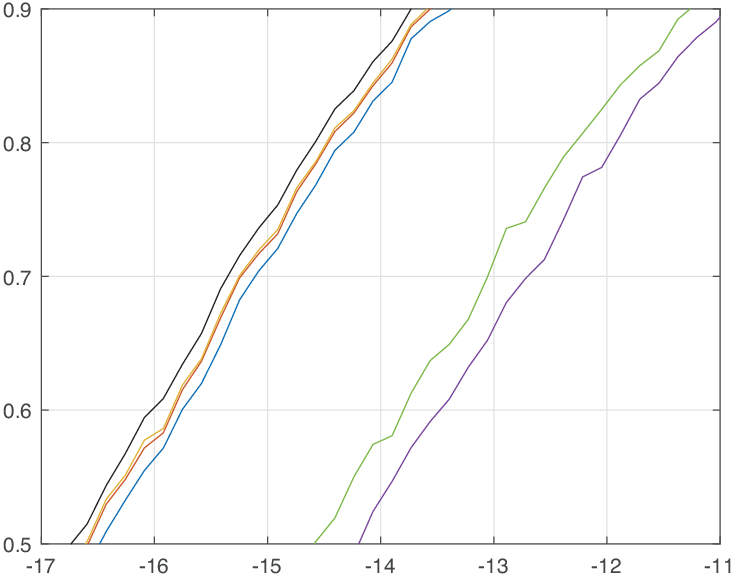}}
    \end{overpic}
	\caption{Probability of detection in Gaussian environment for known $\mathbf{M}$ ($\beta = 3 \times 10^{-4}$, $\rho_1=0.4$, $\rho_2=0.9$ and $PFA=10^{-2}$).}
	\label{fig:pd_snr_known_M}
\end{figure}

Figure \ref{fig:pd_snr_unknown_M} compares the performance of the tests in their adaptive versions. The MIMO-MF curve is shown in black dotted lines for illustrative purposes only (as it assumes the covariance known). It can be seen that when the SCM is evaluated on $2 \times 2m=256$ secondary data, the covariance estimation leads to a loss of 3~dB between the performances of the MIMO-MF test and its adaptive version MIMO-AMF, as expected from the Reed-Mallett-Brennan’s rule \cite{Reed1974}. In their adaptive versions, the proposed detectors offer equivalent performances to the MIMO-AMF detectors while offering additional robustness and flexibility conditions on a possible lack of knowledge of the covariance (estimated to be within two scale factors).
The gain compared to single array detectors, although reduced compared to the case where $\mathbf{M}$ is known, remains favorable and of the order of 0.5 to 1~dB at $PD=0.8$. In other words, for an SNR of -11~dB, $PD = 0.75$ for ANMF 1 and 0.85 for M-ANMF-G$_{SCM}$ (or M-ANMF-R$_{SCM}$).

\begin{figure}[htbp]
	\centering
    \begin{overpic}
        [width = 1\linewidth]{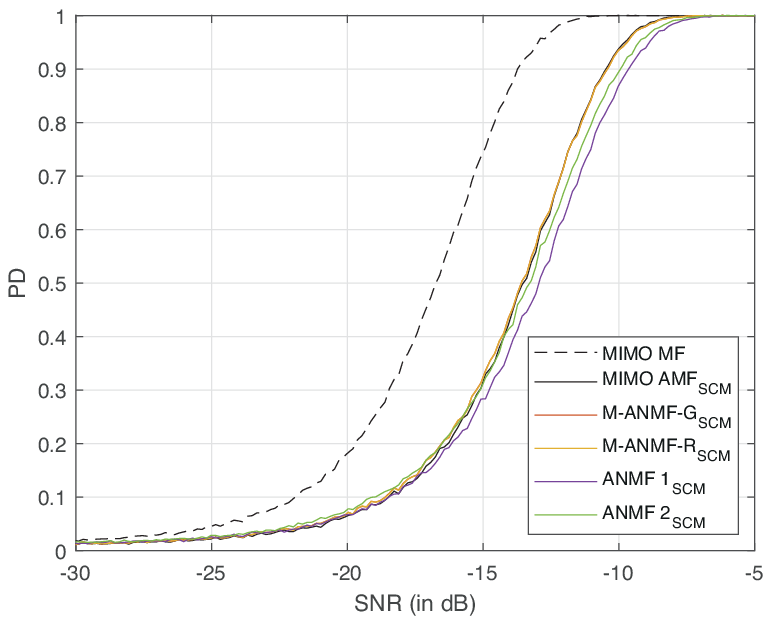}
        \put(12,43){\includegraphics[width = 0.44\linewidth]{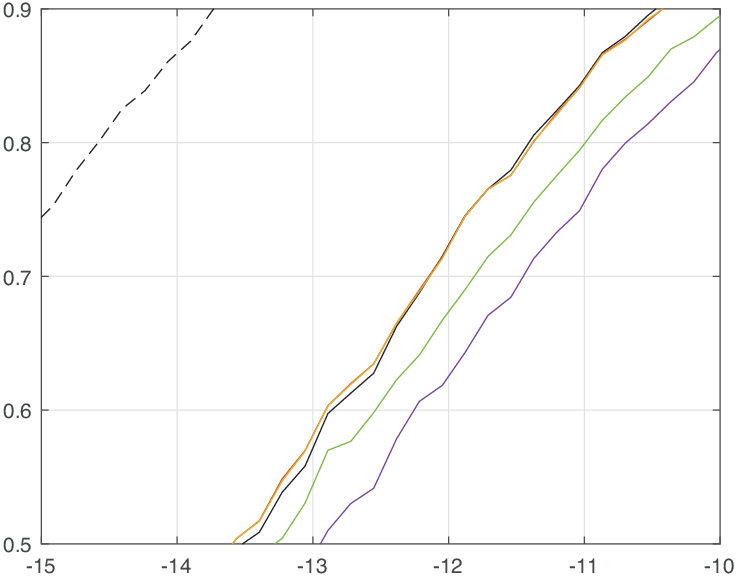}}
    \end{overpic}
	\caption{Probability of detection in Gaussian environment for unknown $\mathbf{M}$ ($PFA=10^{-2}$, SCMs are estimated based on $2 \times 2m$ secondary data).}
	\label{fig:pd_snr_unknown_M}
\end{figure}

\subsection{Performance in impulsive non-Gaussian clutter}
\label{subsec:pref_nonGaussian_clutter}

PFA-threshold curves in \emph{K}-distributed environment are displayed in Figure \ref{fig:PFA_threshold_curves_NG} to characterize the matrix-CFAR behavior. The detectors based on the new covariance matrix estimator systematically have lower detection thresholds than those obtained with the SCM. While optimal for a Gaussian clutter, the MIMO-MF detector is no longer suitable. The marker overlays highlight the matrix-CFAR behavior of the Rao detector in a non-Gaussian environment. The case of the GLRT detector is much less obvious: at this stage, it seems complicated to consider that the false alarm curves are strictly independent of the covariance matrix.

\begin{figure}[htbp]
	\centering
	\includegraphics[width = 1\linewidth]{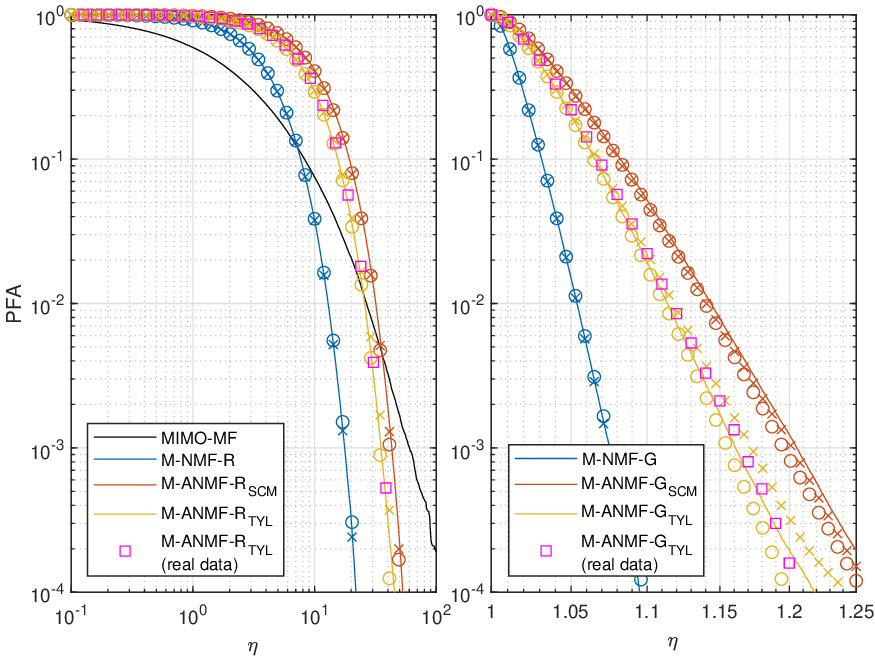}
	\caption{Pfa-threshold curves for Rao (left) and GLRT (right) detectors in non-Gaussian \emph{K}-distributed environment ($\nu = 0.5$). In blue $\mathbf{M}$ is known, in red $\mathbf{M}$ is estimated with the SCM, in yellow $\mathbf{M}$ is estimated with the new estimator $\widehat{\mathbf{M}}_{\text{2TYL}}$, the OGD or \emph{Matched Filter} detector is black. In magenta the curves are obtained on real data between 300~m and 315~m (see \ref{subsec:data_statistics}).}
	\label{fig:PFA_threshold_curves_NG}
\end{figure}

The detection performances are compared in Figure \ref{fig:pd_snr_unknown_M_NG}. For the Rao and GLRT detectors, using the estimator \eqref{eq:cov_Tyler} in an impulsive non-Gaussian environment induces a gain in detection performance of the order of 1.2~dB at $PD=0.8$ with the SCM. Compared to the best competitors the improvement is 3~dB at $PD=0.8$.

\begin{figure}[htbp]
	\centering
    \begin{overpic}
        [width = 1\linewidth]{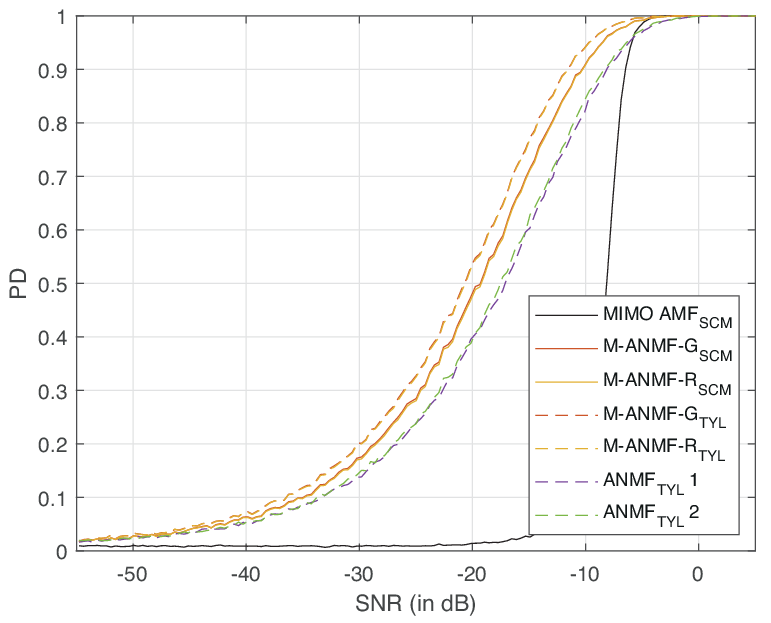}
        \put(12,43){\includegraphics[width = 0.44\linewidth]{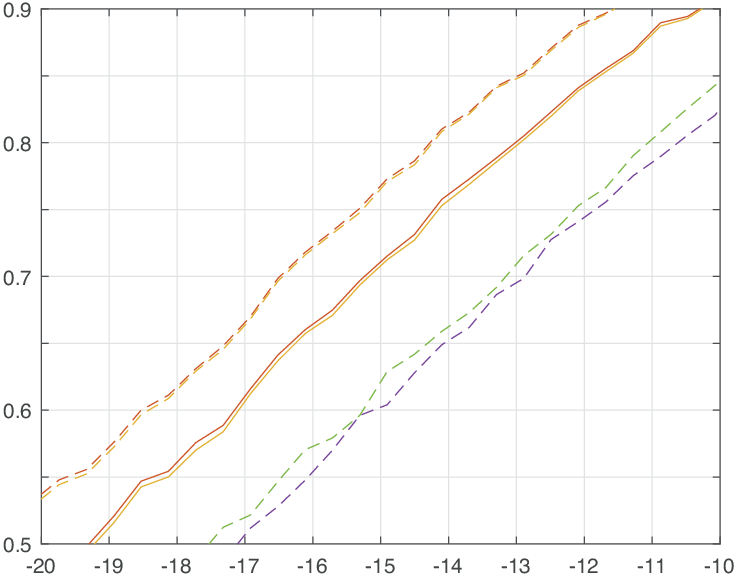}}
    \end{overpic}
	\caption{Probability of detection in \emph{K}-distributed environment for unknown $\mathbf{M}$ ($\nu = 0.5$, $PFA=10^{-2}$, $2 \times 2m$ secondary data). In solid lines, the detectors are built with the SCM estimate. In dotted-lines, the detectors are built with Tyler's (single array) or the new (dual arrays) estimator.}
	\label{fig:pd_snr_unknown_M_NG}
\end{figure}

\section{Experimental data with real target}
\label{sec:results_exp}

The sonar measurements were collected during an acquisition campaign in May 2022 in La Ciotat Bay, Mediterranean Sea. The experimental conditions are described in \ref{subsec:fls_experiment}, and the dataset of interest is shown in Figure \ref{fig:fls_swath}.
Echoes from the true target (right in Figure \ref{fig:drix_and_target}) are observed at the 2040th range cell.

\subsection{Overview of real data statistics}
\label{subsec:data_statistics}

1000 real data $2m$-samples are selected from the range bin 1000 to 1999 (i.e. from 78~m to 140~m in Figure \ref{fig:fls_swath}).
If the samples are complex Gaussian $\mathbf{x} \sim \mathcal{CN}(\mathbf{0}, \mathbf{M}) \in \mathbb{C}^{2m}$ the Hermitian form is Gamma distributed $\mathbf{x}^H \widehat{\mathbf{M}}^{-1} \mathbf{x} \sim \mathrm{Gam} \left( 2m , 1 \right)$. The modulus (or amplitude) of a standardized univariate marginal follows a Rayleigh distribution:
\begin{equation}
    p_x \left( x \right) = 2 x \exp \left( -x^2 \right) \, .
\end{equation}
Note that the PDF of the amplitude of a standardized univariate \emph{K}-distribution is given by \cite{Ollila2012}:
\begin{equation}
    p_x \left( x \right) = \cfrac{2 \sqrt{\nu}}{2^{\nu-1} \Gamma \left( \nu \right)} \left( 2 \sqrt{\nu} x \right)^{\nu} K_{\nu-1} \left( 2 \sqrt{\nu} x \right) \ ,
\end{equation}
where $\nu > 0$ is the shape parameter and $K_{l} \left( . \right)$ denotes the modified Bessel function of the second kind of order $l$.

Figure \ref{fig:real_data_distrib} (left) shows the quantile-quantile diagram of the quadratic form $\mathbf{x}^H \widehat{\mathbf{M}}^{-1} \mathbf{x}$ versus a Gamma distribution.
Quantile-quantile diagrams depict the scatter-plot of empirical quantiles versus the theoretical quantiles of a presupposed distribution. If the quantiles are similar, the plot looks like a straight line, and we can conclude that the empirical and theoretical distributions are identical.
In Figure \ref{fig:real_data_distrib} the curve deviates from a straight line as we move away from the central abscissa. The plot is not linear. The empirical and reference distributions are not the same and the real data is not complex Gaussian. On the right side of Figure \ref{fig:real_data_distrib} the empirical amplitude distribution of the first marginal is plotted as a histogram, Rayleigh's PDF is blue, the PDF of the amplitude of a standardized univariate \emph{K}-distribution is red. 
While the Rayleigh distribution gives a very poor fit, the agreement with a \emph{K}-distribution is very good. The analysis of these simple statistical indicators shows that empirical data are not distributed according to a Gaussian distribution and are rather close to a \emph{K}-distribution.

\begin{figure}[H]
	\begin{minipage}{0.47\linewidth}
	\includegraphics[width = 0.9\linewidth]{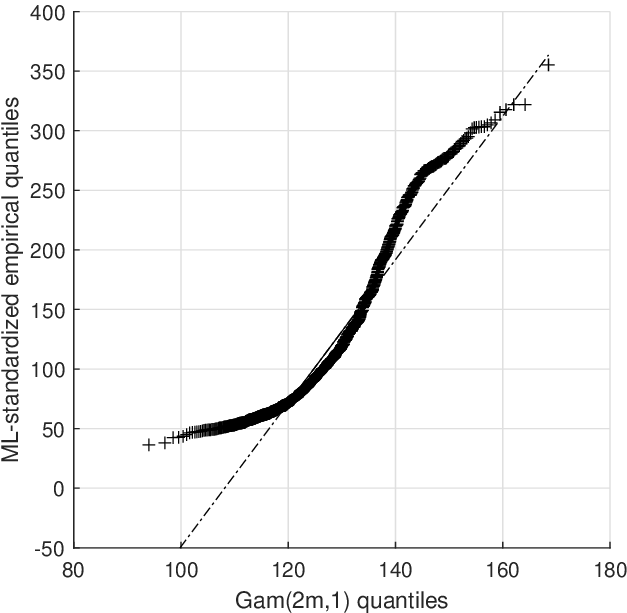}
	\end{minipage}
	\hfill
	\begin{minipage}{0.53\linewidth}
	\includegraphics[width = \linewidth]{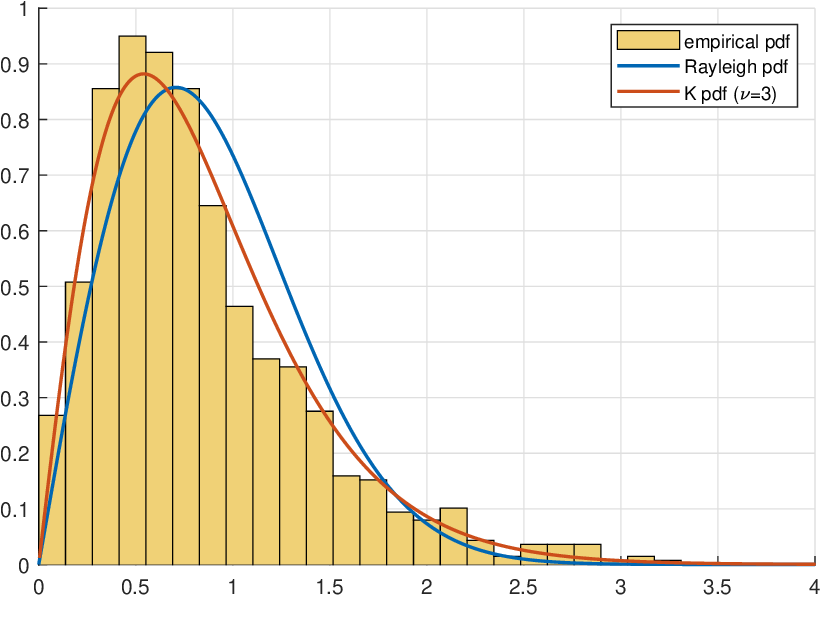}
	\end{minipage}
	\caption{Quantile-quantile diagram (left) and empirical distribution (right) on real data.}
     \label{fig:real_data_distrib}
\end{figure}

Figure \ref{fig:PFA_threshold_curves_NG} also shows the Pfa-threshold curves of the Rao and GLRT detectors obtained on real experimental data (from 300~m to 315~m on Figure \ref{fig:fls_swath}), with magenta markers. The markers appear to be superimposed on the simulated data. This good agreement validates our modeling of the environment and the ability to predefine the detection threshold. Note that since the SCM estimator is not texture-independent, the corresponding Pfa-threshold curves are not shown.

\subsection{Illustrations of detection test outputs}
\label{subsec:outputs}

Figure \ref{fig:anmf_output} provides examples of real data outputs from conventional ANMF detectors based on a single array. The real target is observed in range bin 2040 (143~m away from the sonar), at angular bin $\theta_1=26$ ($-12.4\Deg$ azimuth), and angular bin $\theta_2=37$ ($8.6\Deg$ elevation).

\begin{figure}[htbp]
	\centering
	\includegraphics[width = 0.49\linewidth]{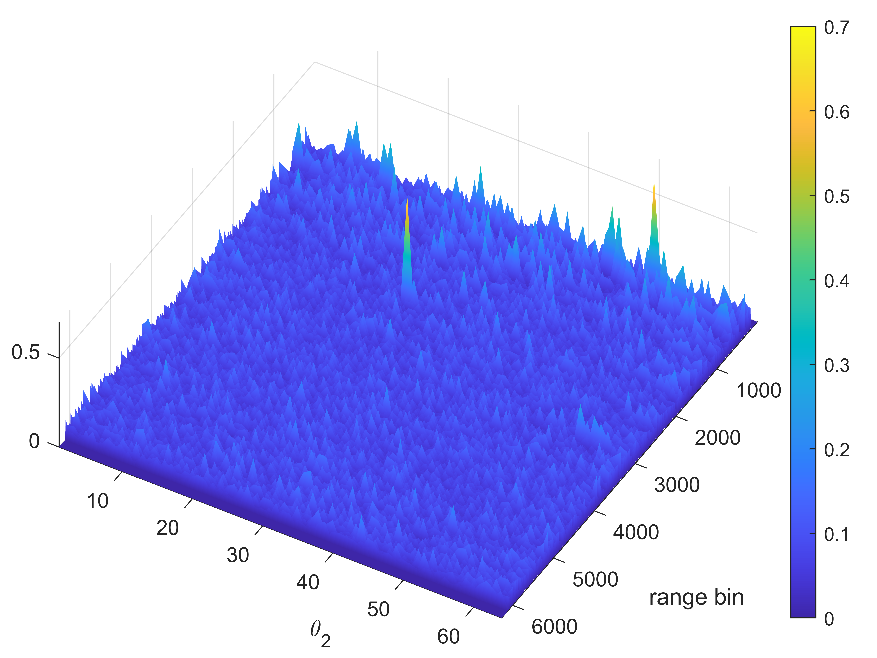}
	\includegraphics[width = 0.49\linewidth]{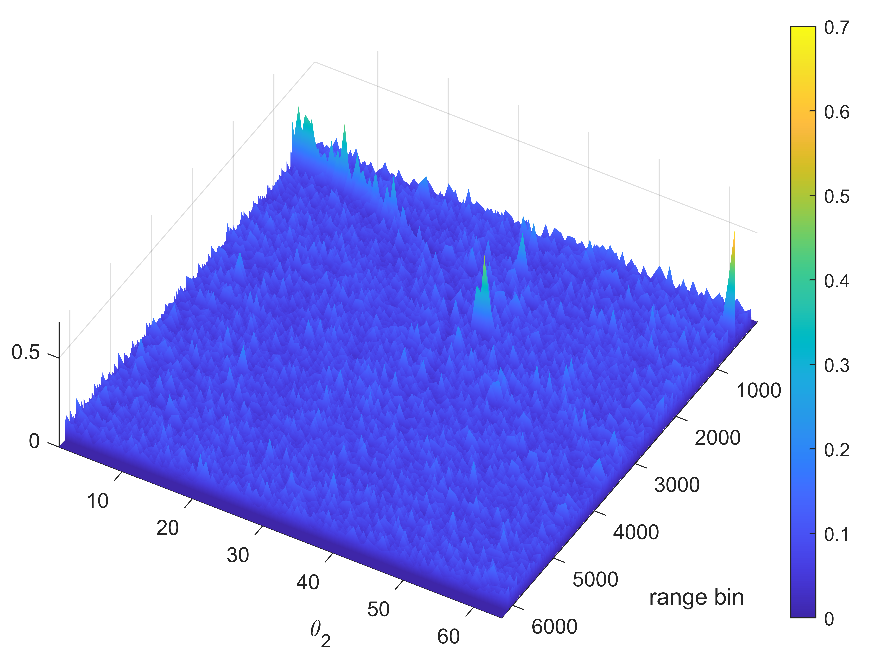}
	\caption{ANMF$_{SCM}$ detector outputs on real sonar data. A target is observed on array 1 (left) at coordinates $(26, \, 2040)$ and on array 2 (right) at coordinates $(37, \, 2040)$. The SCM is built from 256 secondary data.}
	\label{fig:anmf_output}
\end{figure}

Figure \ref{fig:mimo_output} shows the outputs of the Rao and GLRT detectors applied simultaneously to both arrays at the specific range bin 2040. These subfigures are not directly comparable with each other or with Figure \ref{fig:anmf_output}. The target is perfectly located in azimuth and elevation.

\begin{figure}[htbp]
	\centering
	\includegraphics[width = 0.49\linewidth]{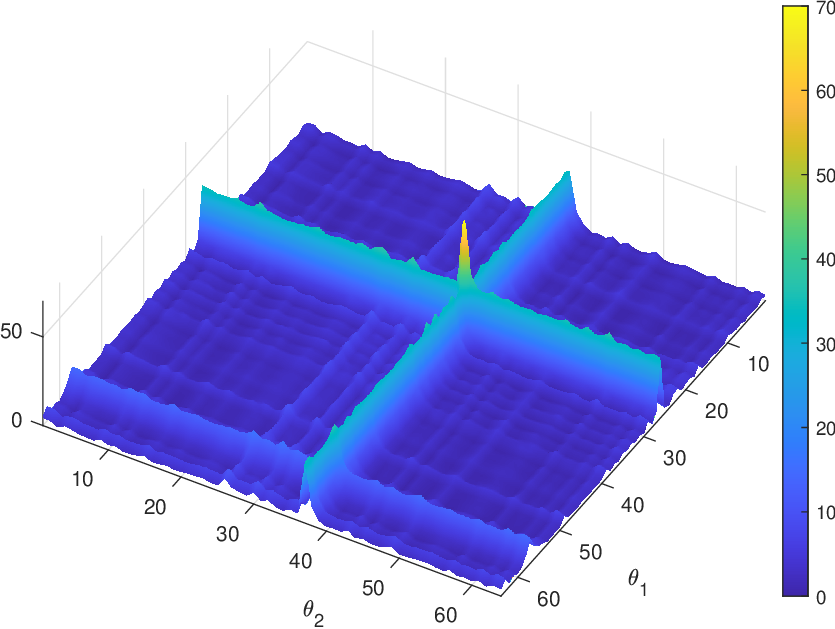}
	\includegraphics[width = 0.49\linewidth]{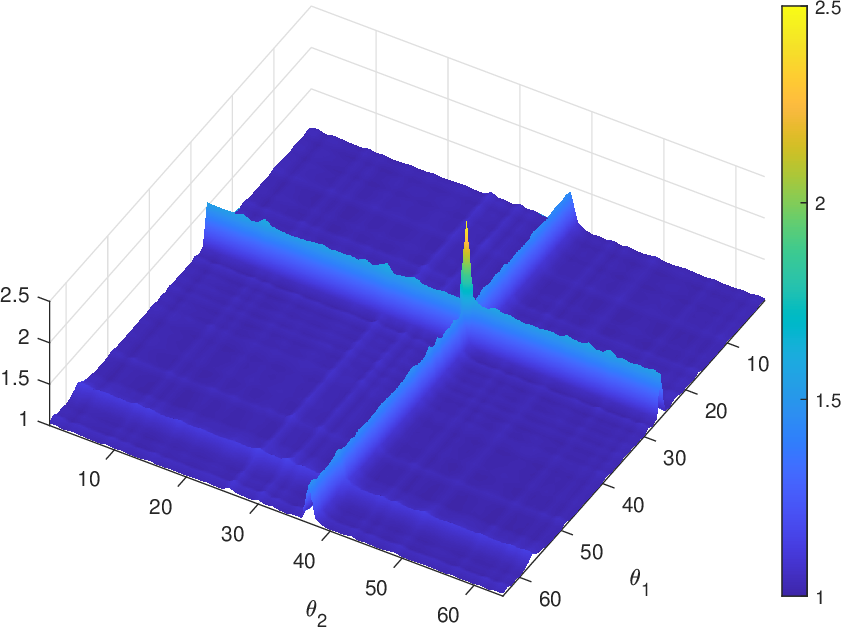}
	\caption{Outputs of the M-ANMF-R$_{SCM}$ (left) and M-ANMF-G$_{SCM}$ (right) detectors. The real target is at coordinates (37,\, 26). The SCM is built from 256 secondary data.}
	\label{fig:mimo_output}
\end{figure}

\subsection{Robustness to training data corruption}
\label{subsec:trainingdata}

Previously, we assumed the availability of a set of $K \geq 2m$ i.i.d. secondary data, supposed free of signal components and sharing statistic properties with the noise in the cell under test. In practice, such data can be obtained by processing samples in spatial proximity to the range bin being evaluated, and the absence of signals is not always verified or checkable. In particular, this assumption is no longer valid if another target is present in these secondary data.

Figure \ref{fig:corrupted_rao} illustrates the robustness of the covariance matrix estimators to data corruption. A synthetic target is added 100 samples away from the real target and contaminates the secondary dataset. On the left-hand side, the output of M-ANMF-R$_{SCM}$ is strongly degraded. The SCM is not robust to the presence of outliers. The target is hardly discernible, and the maximum is wrongly located. Under the same conditions, the response of the M-ANMF-R$_{TYL}$ detector is visualized in the right part. Although degraded compared to Figure \ref{fig:mimo_output}, the behavior is still largely usable. The target presence is easily identified, and the new estimator \eqref{eq:cov_Tyler} is much more resistant to data contamination.

\begin{figure}[htbp]
	\centering
	\includegraphics[width = 0.49\linewidth]{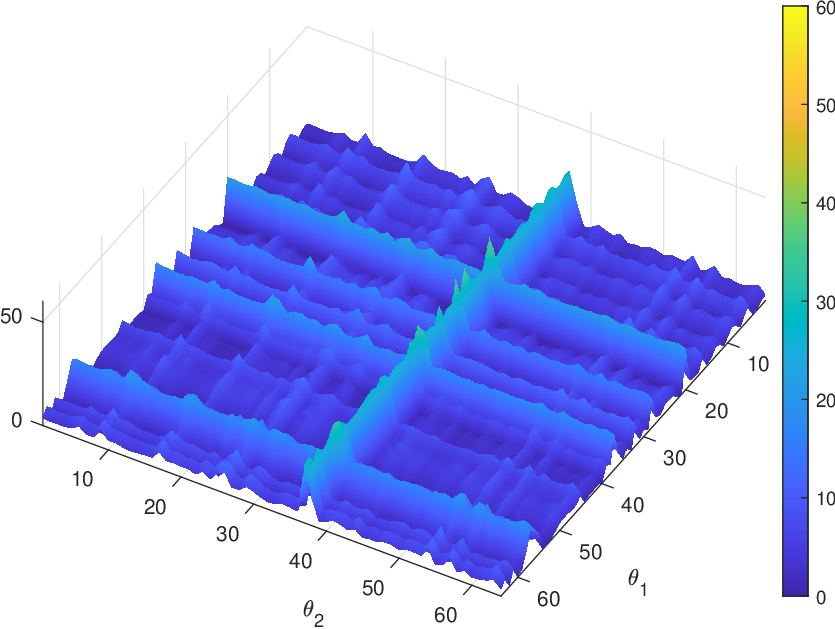}
	\includegraphics[width = 0.49\linewidth]{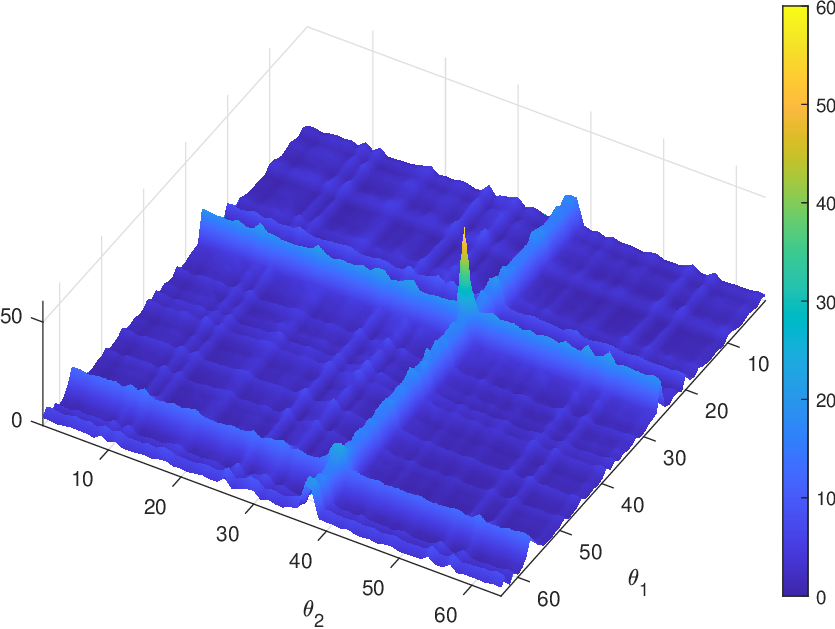}
	\caption{Rao detector outputs based on the SCM (left) or Tyler (right) with secondary data corruption. The covariance matrix estimators are based on 256 secondary data.}
	\label{fig:corrupted_rao}
\end{figure}

\section{Conclusions}
\label{sec:conclusions}

In this paper, we considered the problem of adaptive point target detection by a correlated multi-arrays Mills Cross sonar system. Using a 2-step approach, we first derived two new detectors that are robust to differences in target amplitudes and to unknown scaling factors on the covariances. Subsequently, we have introduced an innovative estimator of the covariance matrix suitable to any non-Gaussian MSG environment. By these very general assumptions, the framework of the study can therefore concern systems with co-located or remote arrays.

Experimental results show that the detection performance is up to 3~dB better than conventional approaches. The detectors cover a larger detection area and are particularly robust to spikes, impulsive noise, and data contamination.

Future work will focus on establishing a theoretical demonstration of the matrix-CFAR behavior of these detectors, and on generalizing solutions for different numbers and geometries of arrays.

\section*{Acknowledgments}
\label{sec:acknowledgement}

This work has been done thanks to the facilities offered by the Univ. Savoie Mont Blanc - CNRS/IN2P3 MUST computing center.

\appendices
\section{GLRT's derivation}
\label{sec:appendixA}

For the following, and for ease of reading, the punctuation mark ``tilde" will be omitted. Thus $\widetilde{\boldsymbol{\Sigma}}$ and $\tilde{\sigma}_i$ will simply be denoted as $\boldsymbol{\Sigma}$ and $\sigma_i$. The same is true for their respective estimates.

\subsection{Maximum Likelihood Estimator of $\boldsymbol{\Sigma}$ under $H_0$}
\label{subsec:Sigma0_derivation}

The MLE $\hat{\boldsymbol{\Sigma}}_0$ is derived from the log-likelihood function:
\begin{equation*}
\ln p_{\mathbf{x}}(\mathbf{x}; \boldsymbol{\Sigma}, H_0) =  -mL \ln \pi +2 \ln \lvert \boldsymbol{\Sigma}^{-1} \rvert - \ln \lvert \mathbf{M} \rvert - \left( {\mathbf{x}}^H \boldsymbol{\Sigma}^{-1} \mathbf{M}^{-1} \boldsymbol{\Sigma}^{-1} \mathbf{x} \right) \, ,
\end{equation*} whose derivative relative to $\boldsymbol{\Sigma}^{-1}$ is:
\begin{equation*}
\cfrac{\partial \ln p_{\mathbf{x}}(\mathbf{x}; \boldsymbol{\Sigma}, H_0)}{\partial \boldsymbol{\Sigma}^{-1}} = 2 \cfrac{\partial \ln \lvert \boldsymbol{\Sigma}^{-1} \rvert}{\partial \boldsymbol{\Sigma}^{-1}} - \cfrac{\partial \mathbf{x} ^H \boldsymbol{\Sigma}^{-1} \mathbf{M}^{-1} \boldsymbol{\Sigma}^{-1} \mathbf{x}}{\partial \boldsymbol{\Sigma}^{-1}} \, .
\end{equation*}
Knowing that (\cite{Petersen2012} (57))
$\cfrac{\partial \ln \lvert \boldsymbol{\Sigma}^{-1} \rvert}{\partial \boldsymbol{\Sigma}^{-1}} = \boldsymbol{\Sigma}$, and (\cite{Petersen2012} (82)) $\cfrac{\partial \mathbf{x}^H \boldsymbol{\Sigma}^{-1} \mathbf{M}^{-1} \boldsymbol{\Sigma}^{-1} \mathbf{x}}{\partial \boldsymbol{\Sigma}^{-1}}  = 2 \text{Re} \left( \mathbf{M}^{-1} \boldsymbol{\Sigma}^{-1} \mathbf{x} \, \mathbf{x}^H \right)$ we have:
\begin{equation*}
\cfrac{\partial \ln p_{\mathbf{x}}(\mathbf{x}; \boldsymbol{\Sigma}, H_0)}{\partial \boldsymbol{\Sigma}^{-1}} = 2 \boldsymbol{\Sigma} - 2 \text{Re} \left( \mathbf{M}^{-1} \boldsymbol{\Sigma}^{-1} \mathbf{x} \, \mathbf{x}^H \right) \, ,
\end{equation*} which leads to: 
\begin{equation}
\hat{\boldsymbol{\Sigma}}_0 = \text{Re} \left( \mathbf{M}^{-1} \hat{\boldsymbol{\Sigma}}_0^{-1} \mathbf{x} \, \mathbf{x}^H \right)\,.
\end{equation}
Expanding this matrix product with $\hat{\boldsymbol{\Sigma}}_0 = \begin{bmatrix} 
\hat{\sigma}_{1_0} \mathbf{I}_m & \mathbf{0} \\ 
\mathbf{0} & \hat{\sigma}_{2_0} \mathbf{I}_m \\
\end{bmatrix}$, we have:
\begin{equation*}
\hat{\sigma}_{1_0} \, \mathbf{I}_m = \text{Re} \left( \mathbf{M}_{11}^{-1}\cfrac{\mathbf{x}_1 \, \mathbf{x}_1^H}{\hat{\sigma}_{1_0}} + \mathbf{M}_{12}^{-1}\cfrac{\mathbf{x}_2 \, \mathbf{x}_1^H}{\hat{\sigma}_{2_0}} \right) \, ,
\end{equation*} 
where $\mathbf{M}_{11}^{-1}$ and $\mathbf{M}_{12}^{-1}$ are the 11th and 12th block of $\mathbf{M}^{-1}$ respectively. Using the trace operator:
\begin{align*}
	m \, \hat{\sigma}_{1_0} & = \tr \left[ \text{Re} \left( \mathbf{M}_{11}^{-1}\cfrac{\mathbf{x}_1 \, \mathbf{x}_1^H}{\hat{\sigma}_{1_0}} + \mathbf{M}_{12}^{-1}\cfrac{\mathbf{x}_2 \, \mathbf{x}_1^H}{\hat{\sigma}_{2_0}} \right) \right] \, ,\\	
	& = \cfrac{\mathbf{x}_1^H \mathbf{M}_{11}^{-1} \mathbf{x}_1}{\hat{\sigma}_{1_0}} + \text{Re} \left( \cfrac{\mathbf{x}_1^H \mathbf{M}_{12}^{-1} \mathbf{x}_2}{\hat{\sigma}_{2_0}} \right)\, .
\end{align*}
We then have:
\begin{equation}
\hat{\sigma}_{1_0}  = 
\cfrac{1}{\hat{\sigma}_{1_0}} \cfrac{\mathbf{x}_1^H \mathbf{M}_{11}^{-1} \mathbf{x}_1}{m} + 
\cfrac{1}{\hat{\sigma}_{2_0}} \cfrac{\text{Re} \left( \mathbf{x}_1^H \mathbf{M}_{12}^{-1} \mathbf{x}_2 \right)}{m}\, ,
\label{eq:sigma01}
\end{equation} and
\begin{equation}
\hat{\sigma}_{2_0}  = 
\cfrac{1}{\hat{\sigma}_{1_0}} \cfrac{\text{Re} \left( \mathbf{x}_2^H \mathbf{M}_{21}^{-1} \mathbf{x}_1 \right)}{m} +
\cfrac{1}{\hat{\sigma}_{2_0}} \cfrac{\mathbf{x}_2^H \mathbf{M}_{22}^{-1} \mathbf{x}_2}{m} \, .
\label{eq:sigma02}
\end{equation}

Denoting $a_1=\cfrac{\mathbf{x}_1^H \mathbf{M}_{11}^{-1} \mathbf{x}_1}{m}$, $a_{12}=\cfrac{\text{Re} \left( \mathbf{x}_1^H \mathbf{M}_{12}^{-1} \mathbf{x}_2 \right)}{m}$, \\
$a_2=\cfrac{\mathbf{x}_2^H \mathbf{M}_{22}^{-1} \mathbf{x}_2}{m}$ and using \eqref{eq:sigma01} and \eqref{eq:sigma02}:
\begin{equation}
\left\lbrace
\begin{array}{ll}
\hat{\sigma}_{1_0}^2 = a_1 + \cfrac{\hat{\sigma}_{1_0}}{\hat{\sigma}_{2_0}} \, a_{12} \\
\hat{\sigma}_{2_0}^2 = \cfrac{\hat{\sigma}_{2_0}}{\hat{\sigma}_{1_0}} \, a_{12} + a_2\, ,
\end{array}
\right.\
\label{eq:syst_L_2}
\end{equation}
or:
\begin{equation*}
\left\lbrace
\begin{array}{ll}
\hat{\sigma}_{1_0}^2 = a_1 + \cfrac{\hat{\sigma}_{1_0}}{\hat{\sigma}_{2_0}} \, a_{12} \\
\hat{\sigma}_{1_0}^2  = \cfrac{\hat{\sigma}_{1_0}}{\hat{\sigma}_{2_0}} \, a_{12} + \cfrac{\hat{\sigma}_{1_0}^2}{\hat{\sigma}_{2_0}^2} \, a_2\, .
\end{array}
\right.\
\end{equation*}
By equalization of right-hand terms:
\begin{equation*}
a_1 = \cfrac{\hat{\sigma}_{1_0}^2}{\hat{\sigma}_{2_0}^2} \, a_2\, , 
\end{equation*}
and keeping the positive solution: $\cfrac{\hat{\sigma}_{1_0}}{\hat{\sigma}_{2_0}} = \sqrt{\cfrac{a_1}{a_2}} \, $, we obtain from \eqref{eq:syst_L_2}:
\begin{equation}
\left\lbrace
	\begin{array}{ll}
	\hat{\sigma}_{1_0}^2 = a_1 + \sqrt{\cfrac{a_1}{a_2}} \, a_{12} \\
	\hat{\sigma}_{2_0}^2 = a_2 + \sqrt{\cfrac{a_2}{a_1}} \, a_{12} \, ,
	\end{array}
\right.\
\end{equation}
and
\begin{equation}
\hat{\sigma}_{1_0}^2 \, \hat{\sigma}_{2_0}^2 = \left( \sqrt{a_1 \, a_2} + a_{12} \right) ^2 .
\end{equation}

\subsection{Maximum Likelihood Estimator of $\boldsymbol{\alpha}$ under $H_1$}
\label{subsec:Alpha_derivation}

The ML estimates $\hat{\boldsymbol{\alpha}} = \left[ \hat{\alpha}_1 \, \hat{\alpha}_2 \right] ^T$ is found minimizing $\left( \mathbf{x} - \mathbf{P} \boldsymbol{\alpha} \right) ^H \mathbf{C}^{-1} \left( \mathbf{x} - \mathbf{P} \boldsymbol{\alpha} \right)$ with respect to $\boldsymbol{\alpha}$ as (\cite{Kay1993} (15.50)):
\begin{equation}
	\begin{array}{ll}
	\hat{\boldsymbol{\alpha}} \!\!\!
	& =  \left( \mathbf{P}^{H} \mathbf{C}^{-1} \mathbf{P} \right)^{-1} \mathbf{P}^{H} \mathbf{C}^{-1} \mathbf{x}\, , \\
	 \!\!\!& =  \left( \mathbf{P}^{H} \boldsymbol{\Sigma}^{-1} \mathbf{M}^{-1} \boldsymbol{\Sigma}^{-1} \mathbf{P} \right)^{-1} \mathbf{P}^{H} \boldsymbol{\Sigma}^{-1} \mathbf{M}^{-1} \boldsymbol{\Sigma}^{-1} \mathbf{x} \, .
	\end{array}
	\label{eq:alpha}
\end{equation}

\subsection{Maximum Likelihood Estimator of $\boldsymbol{\Sigma}$ under $H_1$}
\label{subsec:Sigma1_derivation}

The derivative of the log-likelihood function under $H_1$ with respect to $\boldsymbol{\Sigma}^{-1}$ is:
\begin{equation*}
\cfrac{\partial \ln p_{\mathbf{x}}(\mathbf{x}; \hat{\boldsymbol{\alpha}}, \boldsymbol{\Sigma}, H_1)}{\partial \boldsymbol{\Sigma}^{-1}} = 2 \, \cfrac{\partial \ln \lvert \boldsymbol{\Sigma}^{-1} \rvert}{\partial \boldsymbol{\Sigma}^{-1}}
 - \cfrac{\partial \left( \mathbf{x} - \mathbf{P} \hat{\boldsymbol{\alpha}} \right) ^H \boldsymbol{\Sigma}^{-1} \mathbf{M}^{-1} \boldsymbol{\Sigma}^{-1} \left( \mathbf{x} - \mathbf{P} \hat{\boldsymbol{\alpha}} \right)}{\partial \boldsymbol{\Sigma}^{-1}} \, .
\end{equation*}
Furthermore:
\begin{multline*}
\left( \mathbf{x} - \mathbf{P} \hat{\boldsymbol{\alpha}} \right) ^H \boldsymbol{\Sigma}^{-1} \mathbf{M}^{-1} \boldsymbol{\Sigma}^{-1} \left( \mathbf{x} - \mathbf{P} \hat{\boldsymbol{\alpha}} \right) \\
= \mathbf{x}^H \boldsymbol{\Sigma}^{-1} \mathbf{M}^{-1} \boldsymbol{\Sigma}^{-1} \mathbf{x} - \hat{\boldsymbol{\alpha}}^H \mathbf{P}^H \boldsymbol{\Sigma}^{-1} \mathbf{M}^{-1} \boldsymbol{\Sigma}^{-1} \mathbf{x} \\
= \mathbf{x}^H \boldsymbol{\Sigma}^{-1} \mathbf{M}^{-1} \boldsymbol{\Sigma}^{-1} \mathbf{x} - \mathbf{x}^H \boldsymbol{\Sigma}^{-1} \mathbf{M}^{-1} \boldsymbol{\Sigma}^{-1} \mathbf{P} 
\left( \mathbf{P}^H \boldsymbol{\Sigma}^{-1} \mathbf{M}^{-1} \boldsymbol{\Sigma}^{-1} \mathbf{P} \right) ^{-1} \\ \mathbf{P}^H \boldsymbol{\Sigma}^{-1} \mathbf{M}^{-1} \boldsymbol{\Sigma}^{-1} \mathbf{x}\, .
\end{multline*}
If 
$\mathbf{Z}=\begin{bmatrix}
\sigma_1 & 0 \\ 
0 & \sigma_2 \\
\end{bmatrix}$, we then notice that 
$\boldsymbol{\Sigma}^{-1} \mathbf{P} = \mathbf{P} \mathbf{Z}^{-1}$ and $\mathbf{P}^H \boldsymbol{\Sigma}^{-1} = \mathbf{Z}^{-1} \mathbf{P}^H$. Thus:
\begin{align*}
\mathbf{x}^H \boldsymbol{\Sigma}^{-1} \mathbf{M}^{-1} \boldsymbol{\Sigma}^{-1} \mathbf{P} \left( \mathbf{P}^H \boldsymbol{\Sigma}^{-1} \mathbf{M}^{-1} \boldsymbol{\Sigma}^{-1} \mathbf{P} \right) ^{-1} 
 \mathbf{P}^H \boldsymbol{\Sigma}^{-1} \mathbf{M}^{-1} \boldsymbol{\Sigma}^{-1} \mathbf{x} \\
= \mathbf{x}^H \boldsymbol{\Sigma}^{-1} \mathbf{M}^{-1} \mathbf{P} \mathbf{Z}^{-1} \left( \mathbf{Z}^{-1} \mathbf{P}^H \mathbf{M}^{-1} \mathbf{P} \mathbf{Z}^{-1} \right) ^{-1} 
 \mathbf{Z}^{-1} \mathbf{P}^H \mathbf{M}^{-1} \boldsymbol{\Sigma}^{-1} \mathbf{x} \\
= \mathbf{x}^H \boldsymbol{\Sigma}^{-1} \mathbf{M}^{-1} \mathbf{P} \left( \mathbf{P}^H \mathbf{M}^{-1} \mathbf{P} \right) ^{-1} \mathbf{P}^H \mathbf{M}^{-1} \boldsymbol{\Sigma}^{-1} \mathbf{x}\, .
\end{align*}

By denoting $\mathbf{D}^{-1} = \mathbf{M}^{-1} \mathbf{P} \left( \mathbf{P}^H \mathbf{M}^{-1} \mathbf{P} \right) ^{-1} \mathbf{P}^H \mathbf{M}^{-1}$:
\begin{equation*}
\left( \mathbf{x} - \mathbf{P} \hat{\boldsymbol{\alpha}} \right) ^H \boldsymbol{\Sigma}^{-1} \mathbf{M}^{-1} \boldsymbol{\Sigma}^{-1} \left( \mathbf{x} - \mathbf{P} \hat{\boldsymbol{\alpha}} \right) 
= \mathbf{x}^H \boldsymbol{\Sigma}^{-1} \mathbf{M}^{-1} \boldsymbol{\Sigma}^{-1} \mathbf{x} - \mathbf{x}^H \boldsymbol{\Sigma}^{-1} \mathbf{D}^{-1} \boldsymbol{\Sigma}^{-1} \mathbf{x}\, .
\end{equation*}
So:
\begin{equation*}
 \cfrac{\partial \left( \mathbf{x} - \mathbf{P} \hat{\boldsymbol{\alpha}} \right) ^H \boldsymbol{\Sigma}^{-1} \mathbf{M}^{-1} \boldsymbol{\Sigma}^{-1} \left( \mathbf{x} - \mathbf{P} \hat{\boldsymbol{\alpha}} \right)}{\partial \boldsymbol{\Sigma}^{-1}} \, = 2 \text{Re} \left[ \left( \mathbf{M}^{-1} - \mathbf{D}^{-1} \right) \boldsymbol{\Sigma}^{-1} \mathbf{x} \, \mathbf{x}^H \right]\, .
\end{equation*}
Finally:
\begin{equation*}
\cfrac{\partial \ln p_{\mathbf{x}}(\mathbf{x}; \hat{\boldsymbol{\alpha}}, \boldsymbol{\Sigma}, H_1)}{\partial \boldsymbol{\Sigma}^{-1}} = 2 \boldsymbol{\Sigma} - 2 \text{Re} \left[ \left( \mathbf{M}^{-1} - \mathbf{D}^{-1} \right) \boldsymbol{\Sigma}^{-1} \mathbf{x} \, \mathbf{x}^H \right] \, .
\end{equation*}
The minimum is given by:
\begin{equation}
\hat{\boldsymbol{\Sigma}}_1 = \text{Re} \left[ \left( \mathbf{M}^{-1} - \mathbf{D}^{-1} \right) \hat{\boldsymbol{\Sigma}}_1^{-1} \mathbf{x} \, \mathbf{x}^H \right] \, .
\label{eq:sigma_1}
\end{equation}

The rest is identical to paragraph \ref{subsec:Sigma0_derivation}.

\subsection{Expression of the GLRT}
\label{subsec:GLRT_expression}

As $\mathbf{x}^H \hat{\boldsymbol{\Sigma}}_0^{-1} \mathbf{M}^{-1} \hat{\boldsymbol{\Sigma}}_0^{-1} \mathbf{x}$ is a real positive scalar, we have:
\begin{align*}
	\mathbf{x}^H \hat{\boldsymbol{\Sigma}}_0^{-1} \mathbf{M}^{-1} \hat{\boldsymbol{\Sigma}}_0^{-1} \mathbf{x}
	& = \text{Re} \left[ \tr \left( \hat{\boldsymbol{\Sigma}}_0^{-1} \mathbf{M}^{-1} \hat{\boldsymbol{\Sigma}}_0^{-1} \mathbf{x} \, \mathbf{x}^H  \right) \right] \, , \\
	& = \tr \left[ \hat{\boldsymbol{\Sigma}}_0^{-1} \text{Re} \left( \mathbf{M}^{-1} \hat{\boldsymbol{\Sigma}}_0^{-1} \mathbf{x} \, \mathbf{x}^H  \right) \right] \, , \\
	& = m L \, .
\end{align*}
So the two PDF take the form:
\begin{eqnarray*}
	p_{\mathbf{x}}(\mathbf{x}; \hat{\boldsymbol{\Sigma}}_0, H_0) \!\!\! & = & \!\!\! \cfrac{1}{\pi^{mL} \lvert \hat{\boldsymbol{\Sigma}}_0 \rvert ^2 \lvert \mathbf{M}  \rvert} \exp \left( -mL \right)\, ,\\
	p_{\mathbf{x}}(\mathbf{x}; \hat{\boldsymbol{\alpha}}, \hat{\boldsymbol{\Sigma}}_1, H_1) \!\!\! & = & \!\!\! \cfrac{1}{\pi^{mL} \lvert \hat{\boldsymbol{\Sigma}}_1 \rvert ^2 \lvert \mathbf{M}  \rvert} \exp \left( -mL \right) \, .
\end{eqnarray*}
The GLRT is expressed as:
\begin{equation}
	L_G(\mathbf{x}) = \cfrac{\cfrac{1}{\lvert \hat{\boldsymbol{\Sigma}}_1 \rvert ^2}}{\cfrac{1}{\lvert \hat{\boldsymbol{\Sigma}}_0 \rvert ^2}} =
	\cfrac{\lvert \hat{\boldsymbol{\Sigma}}_0 \rvert ^2}{\lvert \hat{\boldsymbol{\Sigma}}_1 \rvert ^2}\, ,
\end{equation}
which can be thought of as a generalized variance ratio. As $\hat{\boldsymbol{\Sigma}}_0$ and $\hat{\boldsymbol{\Sigma}}_1$ are diagonal matrices, $\lvert \hat{\boldsymbol{\Sigma}}_0 \rvert = \prod_{i=1}^{2} \hat{\sigma}_{i_0}^m$ and $\lvert \hat{\boldsymbol{\Sigma}}_1 \rvert = \prod_{i=1}^{2} \hat{\sigma}_{i_1}^m$ which leads to \eqref{eq:glrt_detector}.

\section{Special cases of the GLRT detector}
\label{sec:appendixB}

The detector \eqref{eq:glrt_detector} can be declined in already-known cases.

\subsection{Single array case}
For a single array, we have  $\hat{\sigma}_{1_0}^2 = \cfrac{\mathbf{x}_1^H \mathbf{M}_{11}^{-1} \mathbf{x}_1}{m}$ and
$\hat{\sigma}_{1_1}^2 = \cfrac{1}{m} \left( \mathbf{x}_1^H \mathbf{M}_{11}^{-1} \mathbf{x}_1 - \cfrac{\lvert \mathbf{p}_1^H \mathbf{M}_{11}^{-1} \mathbf{x}_1 \rvert ^2}{\mathbf{p}_1^H \mathbf{M}_{11}^{-1} \mathbf{p}_1} \right)$.

Replacing into the likelihood $L_G(\mathbf{x}) = \cfrac{\lvert \hat{\boldsymbol{\Sigma}}_0 \rvert ^2}{\lvert \hat{\boldsymbol{\Sigma}}_1 \rvert ^2} 	= \cfrac{\hat{\sigma}_{1_0}^{2m}}{\hat{\sigma}_{1_1}^{2m}}$ leads to:
\begin{align*}
	L_G(\mathbf{x})^{1/m} 
	& = \cfrac{\hat{\sigma}_{1_0}^2}{\hat{\sigma}_{1_1}^2} 
	= \cfrac{\cfrac{\mathbf{x}_1^H \mathbf{M}_{11}^{-1} \mathbf{x}_1}{m}}{\cfrac{1}{m} \left( \mathbf{x}_1^H \mathbf{M}_{11}^{-1} \mathbf{x}_1 - \cfrac{\lvert \mathbf{p}_1^H \mathbf{M}_{11}^{-1} \mathbf{x}_1 \rvert ^2}{\mathbf{p}_1^H \mathbf{M}_{11}^{-1} \mathbf{p}_1} \right)} \, , \\
	& = \left( 1 - \cfrac{\lvert \mathbf{p}_1^H \mathbf{M}_{11}^{-1} \mathbf{x}_1 \rvert ^2}{\left( \mathbf{p}_1^H \mathbf{M}_{11}^{-1} \mathbf{p}_1 \right)\left( \mathbf{x}_1^H \mathbf{M}_{11}^{-1} \mathbf{x}_1 \right)} \right)^{-1}\, .
\end{align*}

Defining $l_G(\mathbf{x}) = \cfrac{L_G(\mathbf{x})^{1/m} - 1}{L_G(\mathbf{x})^{1/m}} = 1- L_G(\mathbf{x})^{-1/m}$, we obtain the well-known NMF detector \cite{Scharf1994}:
\begin{equation}
l_G(\mathbf{x}) = \cfrac{\lvert \mathbf{p}_1^H \mathbf{M}_{11}^{-1} \mathbf{x}_1 \rvert ^2}{\left( \mathbf{p}_1^H \mathbf{M}_{11}^{-1} \mathbf{p}_1 \right)\left( \mathbf{x}_1^H \mathbf{M}_{11}^{-1} \mathbf{x}_1 \right)} \, .
\end{equation}

\subsection{Uncorrelated arrays case}
When the two arrays are fully uncorrelated, we have $\hat{\sigma}_{i_0}^2 = 
\cfrac{\mathbf{x}_i^H \mathbf{M}_{ii}^{-1} \mathbf{x}_i}{m} $ and
$\hat{\sigma}_{i_1}^2 = 
\cfrac{1}{m} \left( \mathbf{x}_i^H \mathbf{M}_{ii}^{-1} \mathbf{x}_i - \cfrac{\lvert \mathbf{p}_i^H \mathbf{M}_{ii}^{-1} \mathbf{x}_i \rvert ^2}{\mathbf{p}_i^H \mathbf{M}_{ii}^{-1} \mathbf{p}_i} \right)
$. We obtain:
\begin{align}
	L_G(\mathbf{x}) 
	& = \cfrac{\lvert \hat{\boldsymbol{\Sigma}}_0 \rvert ^2}{\lvert \hat{\boldsymbol{\Sigma}}_1 \rvert ^2} 
	= \cfrac{\displaystyle \prod_{i=1}^{2} \hat{\sigma}_{i_0}^{2m}}{\displaystyle \prod_{i=1}^{2} \hat{\sigma}_{i_1}^{2m}}
	= \prod_{i=1}^{2} \left[ \cfrac{\hat{\sigma}_{i_0}^{2}}{\hat{\sigma}_{i_1}^{2}} \right]^m \, , \\
	& = \prod_{i=1}^{2} \left[ 1 - \cfrac{\lvert \mathbf{p}_i^H \mathbf{M}_{ii}^{-1} \mathbf{x}_i \rvert ^2}{\left( \mathbf{p}_i^H \mathbf{M}_{ii}^{-1} \mathbf{p}_i \right)\left( \mathbf{x}_i^H \mathbf{M}_{ii}^{-1} \mathbf{x}_i \right)} \right]^{-m}.
\end{align}
This corresponds to the MIMO ANMF detector on independent arrays presented in \cite{Chong201003}.

\subsection{$\boldsymbol{\Sigma}=\sigma \, \mathbf{I}_{2m}$ case}
When $\boldsymbol{\Sigma}$ is the identity matrix up to a scalar factor, $\sigma_1 = \sigma_2$, whose estimators are renamed $\hat{\sigma}_0$ under  $H_0$ and $\hat{\sigma}_1$ under $H_1$.
\begin{align*}
	L_G(\mathbf{x}) 
	& = \cfrac{\lvert \hat{\boldsymbol{\Sigma}}_0 \rvert ^2}{\lvert \hat{\boldsymbol{\Sigma}}_1 \rvert ^2}  \text{, with } 
\hat{\boldsymbol{\Sigma}}_0 = \hat{\sigma}_0 \mathbf{I}_{mL} \text{ and }
\hat{\boldsymbol{\Sigma}}_1 = \hat{\sigma}_1 \mathbf{I}_{mL} \, , \\
	& = \cfrac{\hat{\sigma}_0^{2mL}}{\hat{\sigma}_1^{2mL}} \, .
\end{align*}
From $\hat{\boldsymbol{\Sigma}}_0 = \text{Re} \left( \mathbf{M}^{-1} \hat{\boldsymbol{\Sigma}}_0^{-1} \mathbf{x} \, \mathbf{x}^H \right)$, we have the following relations:
\begin{align*}
	\tr \left( \hat{\boldsymbol{\Sigma}}_0 \right) & = \tr \left[ \text{Re} \left( \mathbf{M}^{-1} \hat{\boldsymbol{\Sigma}}_0^{-1} \mathbf{x} \,  \mathbf{x}^H \right) \right] \, , \\
	\hat{\sigma}_0 \tr \left( \mathbf{I}_{mL} \right) & = \cfrac{1}{\hat{\sigma}_0} \text{Re} \left[ \tr \left( \mathbf{M}^{-1} \mathbf{x} \, \mathbf{x}^H \right) \right] \, , \\
	\hat{\sigma}_0^2 & = \cfrac{\mathbf{x}^H \mathbf{M}^{-1} \mathbf{x}}{mL} \text{,  as } \mathbf{M}^{-1} \text{ is positive definite}  \, .
\end{align*}
Identically, we have:
\begin{align*}
\hat{\sigma}_1^2 & = 
\cfrac{ \left( \mathbf{x} - \mathbf{P} \boldsymbol{\alpha} \right)^H \mathbf{M}^{-1} \left( \mathbf{x} - \mathbf{P} \boldsymbol{\alpha} \right)}{mL} \, , \\ 
&= \cfrac{ \mathbf{x}^H \mathbf{M}^{-1} \mathbf{x} - \mathbf{x}^H \mathbf{M}^{-1} \mathbf{P} \left( \mathbf{P}^H \mathbf{M}^{-1} \mathbf{P} \right)^{-1} \mathbf{P}^H \mathbf{M}^{-1} \mathbf{x} }{mL} \, .
\end{align*}
\begin{align*}
	L_G & (\mathbf{x})^{1/mL} = \cfrac{\hat{\sigma}_0^{2}}{\hat{\sigma}_1^{2}} \, , \\
	& = \cfrac{\mathbf{x}^H \mathbf{M}^{-1} \mathbf{x}}{ \mathbf{x}^H \mathbf{M}^{-1} \mathbf{x} - \mathbf{x}^H \mathbf{M}^{-1} \mathbf{P} \left( \mathbf{P}^H \mathbf{M}^{-1} \mathbf{P} \right)^{-1} \mathbf{P}^H \mathbf{M}^{-1} \mathbf{x} } \, ,\\
	& = \left( 1 - \cfrac{\mathbf{x}^H \mathbf{M}^{-1} \mathbf{P} \left( \mathbf{P}^H \mathbf{M}^{-1} \mathbf{P} \right)^{-1} \mathbf{P}^H \mathbf{M}^{-1} \mathbf{x} }{\mathbf{x}^H \mathbf{M}^{-1} \mathbf{x}} \right)^{-1} \, .
\end{align*}
By defining $l_G(\mathbf{x}) = \cfrac{L_G(\mathbf{x})^{1/mL} - 1}{L_G(\mathbf{x})^{1/mL}}$ or $L_G(\mathbf{x})^{1/mL} = \left[ 1 - l_G(\mathbf{x}) \right] ^{-1}$, we obtain an equivalent test:
\begin{equation}
	l_G(\mathbf{x}) = \cfrac{\mathbf{x}^H \mathbf{M}^{-1} \mathbf{P} \left( \mathbf{P}^H \mathbf{M}^{-1} \mathbf{P} \right)^{-1} \mathbf{P}^H \mathbf{M}^{-1} \mathbf{x} }{\mathbf{x}^H \mathbf{M}^{-1} \mathbf{x}} \, ,
\end{equation}
which corresponds to the subspace version of the ACE test presented in \cite{Raghavan2016}.

\section{Rao's detector derivation}
\label{sec:appendixC}

The partial derivative of the log-likelihood function is defined as:
\begin{equation*}
    \cfrac{\partial \ln p_{\mathbf{x}}(\mathbf{x}; \boldsymbol{\xi}_R, \boldsymbol{\xi}_S)}{\partial \boldsymbol{\xi}_R} = 
    \begin{bmatrix} 
    \cfrac{\partial \ln p_{\mathbf{x}}(\mathbf{x}; \boldsymbol{\xi}_R, \boldsymbol{\xi}_S)}{\partial  \mathrm{Re} \left( \boldsymbol{\alpha} \right)} \\ 
    \cfrac{\partial \ln p_{\mathbf{x}}(\mathbf{x}; \boldsymbol{\xi}_R, \boldsymbol{\xi}_S)}{\partial  \mathrm{Im} \left( \boldsymbol{\alpha} \right)} 
    \end{bmatrix}.
\end{equation*}
From \cite{Kay1993} (15.60), we obtain: 
\begin{equation*}
    \cfrac{\partial \ln p_{\mathbf{x}}(\mathbf{x}; \boldsymbol{\xi}_R, \boldsymbol{\xi}_S)}{\partial \boldsymbol{\xi}_R} = 
    \begin{bmatrix} 
    2 \, \text{Re} \left[ \mathbf{P}^H \mathbf{C}^{-1} (\boldsymbol{\xi}_S) \left( \mathbf{x} - \mathbf{P} \boldsymbol{\alpha} \right) \right] \\[2mm] 
    2 \, \text{Im} \left[ \mathbf{P}^H \mathbf{C}^{-1} (\boldsymbol{\xi}_S) \left( \mathbf{x} - \mathbf{P} \boldsymbol{\alpha} \right) \right] 
    \end{bmatrix} \, .
\end{equation*}
Thus:
\begin{equation}
    \left. 
    \cfrac{\partial \ln p_{\mathbf{x}}(\mathbf{x}; \boldsymbol{\xi}_R, \boldsymbol{\xi}_S)}{\partial \boldsymbol{\xi}_R} \right| 
    _{\begin{matrix}
    \boldsymbol{\xi}_R = \mathbf{0}_{\phantom{R_0}} \\
    \boldsymbol{\xi}_S = \hat{\boldsymbol{\xi}}_{S_0}
    \end{matrix}} = 
    \begin{bmatrix} 
    2 \, \text{Re} \left[ \mathbf{P}^H \mathbf{C}^{-1} (\hat{\boldsymbol{\xi}}_{S_0}) \mathbf{x} \right] \\[2mm]
    2 \, \text{Im} \left[ \mathbf{P}^H \mathbf{C}^{-1} (\hat{\boldsymbol{\xi}}_{S_0}) \mathbf{x} \right] 
    \end{bmatrix} \, .
    \label{eq:rao_derivee_partielle}
\end{equation}
Using \cite{Kay1993} (15.52) $\mathbf{I}_{\boldsymbol{\xi}_S \boldsymbol{\xi}_R}(\boldsymbol{\xi}_R, \boldsymbol{\xi}_S) = \mathbf{I}_{\boldsymbol{\xi}_R \boldsymbol{\xi}_S}(\boldsymbol{\xi}_R, \boldsymbol{\xi}_S) = \mathbf{0}$, thus $\left[ \mathbf{I}^{-1} (\boldsymbol{\xi}_R, \boldsymbol{\xi}_S) \right]_{\boldsymbol{\xi}_R \boldsymbol{\xi}_R} = \mathbf{I}_{\boldsymbol{\xi}_R \boldsymbol{\xi}_R}^{-1} (\boldsymbol{\xi}_R, \boldsymbol{\xi}_S)$.
The elements of $\mathbf{I}_{\boldsymbol{\xi}_R \boldsymbol{\xi}_R} (\boldsymbol{\xi}_R, \boldsymbol{\xi}_S)$ are given by:
\begin{eqnarray*}
    \left[ \mathbf{I}_{\boldsymbol{\xi}_R \boldsymbol{\xi}_R} (\boldsymbol{\xi}_R, \boldsymbol{\xi}_S) \right]_{11} 
    & = & 
    2 \, \mathbf{P}^H \mathbf{C}^{-1} (\boldsymbol{\xi}_S) \mathbf{P} \, , \\
    \left[ \mathbf{I}_{\boldsymbol{\xi}_R \boldsymbol{\xi}_R} (\boldsymbol{\xi}_R, \boldsymbol{\xi}_S) \right]_{22} 
    & = & 
    2 \, \mathbf{P}^H \mathbf{C}^{-1} (\boldsymbol{\xi}_S) \mathbf{P} \, , \\
    \left[ \mathbf{I}_{\boldsymbol{\xi}_R \boldsymbol{\xi}_R} (\boldsymbol{\xi}_R, \boldsymbol{\xi}_S) \right]_{12} 
    & = & 
    2 \, \text{Re} \left[ \mathrm{i} \, \mathbf{P}^H \mathbf{C}^{-1} (\boldsymbol{\xi}_S) \mathbf{P} \right]  = \mathbf{0} \, , \\
    \left[ \mathbf{I}_{\boldsymbol{\xi}_R \boldsymbol{\xi}_R} (\boldsymbol{\xi}_R, \boldsymbol{\xi}_S) \right]_{21} 
    & = & 
    2 \, \text{Re} \left[ - \mathrm{i} \, \mathbf{P}^H \mathbf{C}^{-1} (\boldsymbol{\xi}_S) \mathbf{P} \right]  = \mathbf{0} \, .
\end{eqnarray*}
We have therefore:
\begin{equation}
    \mathbf{I}_{\boldsymbol{\xi}_R \boldsymbol{\xi}_R} (\boldsymbol{\xi}_R, \boldsymbol{\xi}_S) = 
    \begin{bmatrix} 
    2 \, \mathbf{P}^H \mathbf{C}^{-1} (\boldsymbol{\xi}_S) \mathbf{P} & \mathbf{0} 
    \\ 
    \mathbf{0} & 2 \, \mathbf{P}^H \mathbf{C}^{-1} (\boldsymbol{\xi}_S) \mathbf{P}
    \end{bmatrix}.
    \label{eq:rao_FIM}
\end{equation}
Finally, by replacing \eqref{eq:rao_derivee_partielle} and \eqref{eq:rao_FIM} into \eqref{eq:rao_formula} leads to:
\footnotesize
\begin{equation*}
L_R(\mathbf{x}) = 
\begin{bmatrix} 
2 \, \text{Re} \left( \mathbf{x}^H \mathbf{C}^{-1} (\hat{\boldsymbol{\xi}}_{S_0}) \mathbf{P} \right) & -2 \, \text{Im} \left( \mathbf{x}^H \mathbf{C}^{-1} (\hat{\boldsymbol{\xi}}_{S_0}) \mathbf{P} \right) 
\end{bmatrix} 
\begin{bmatrix} 
	\cfrac{\left( \mathbf{P}^H \mathbf{C}^{-1} (\hat{\boldsymbol{\xi}}_{S_0}) \mathbf{P} \right) ^{-1}}{2} & \mathbf{0} \\ \mathbf{0} & \cfrac{\left( \mathbf{P}^H \mathbf{C}^{-1} (\hat{\boldsymbol{\xi}}_{S_0}) \mathbf{P} \right) ^{-1}}{2}
\end{bmatrix} 
\begin{bmatrix} 
2 \, \text{Re} \left( \mathbf{P}^H \mathbf{C}^{-1} (\hat{\boldsymbol{\xi}}_{S_0}) \mathbf{x} \right) \\
2 \, \text{Im} \left( \mathbf{P}^H \mathbf{C}^{-1} (\hat{\boldsymbol{\xi}}_{S_0}) \mathbf{x} \right) 
\end{bmatrix} ,
\end{equation*}
\normalsize
which simplifies to:
\begin{align*}
L_R(\mathbf{x}) = 2 \left[ \text{Re} \left( \mathbf{x}^H \mathbf{C}^{-1} (\hat{\boldsymbol{\xi}}_{S_0}) \mathbf{P} \right) \left( \mathbf{P}^H \mathbf{C}^{-1} (\hat{\boldsymbol{\xi}}_{S_0}) \mathbf{P} \right)^{-1}
\right. 
\text{Re} \left( \mathbf{P}^H \mathbf{C}^{-1} (\hat{\boldsymbol{\xi}}_{S_0}) \mathbf{x} \right) \\ - \text{Im} \left( \mathbf{x}^H \mathbf{C}^{-1} (\hat{\boldsymbol{\xi}}_{S_0}) \mathbf{P} \right) 
\left( \mathbf{P}^H \mathbf{C}^{-1} (\hat{\boldsymbol{\xi}}_{S_0}) \mathbf{P} \right)^{-1} \left. \text{Im} \left( \mathbf{P}^H \mathbf{C}^{-1} (\hat{\boldsymbol{\xi}}_{S_0}) \mathbf{x} \right) \right] .
\end{align*}

Knowing that $\mathbf{P}^H \mathbf{C}^{-1} (\hat{\boldsymbol{\xi}}_{S_0}) \mathbf{P}$ is real and positive definite, it can be factorized and incorporated into the real and imaginary parts. After some algebraic manipulation, we obtain \eqref{eq:rao_dector}.

\section{Maximum likelihood estimator of the covariance matrix in Compound-Gaussian clutter}
\label{sec:appendixD}

The likelihood of $\displaystyle\left\{\mathbf{x}_k\right\}_{k\in[1,K]}$ under $H_0$ can  be rewritten as:
\begin{align*}
p_{\mathbf{x}}&\left(\left\{\mathbf{x}_k\right\}_{k\in[1,K]} ; \mathbf{M}, \mathbf{T}_k, H_0\right) \\ & = \prod_{k=1}^{K} \cfrac{1}{\pi^{2m} \lvert \widetilde{\mathbf{C}} \rvert} \exp \left( - \mathbf{x}_k^H \widetilde{\mathbf{C}}^{-1} \mathbf{x}_k \right) \, ,\\
	& = \cfrac{1}{\pi^{2m\,K} \lvert \mathbf{M} \rvert ^K} \prod_{k=1}^{K} \cfrac{1}{\lvert \mathbf{T}_k \rvert ^2} \exp \left( - \mathbf{x}_k^H \mathbf{T}_k^{-1} \mathbf{M}^{-1} \mathbf{T}_k^{-1} \mathbf{x}_k \right)\, ,
\end{align*}
where $\widetilde{\mathbf{C}} = \mathbf{T}_k \mathbf{M} \mathbf{T}_k$, and	$\mathbf{T}_k = \begin{bmatrix} \sqrt{\tau_{1_k}} & 0 \\ 0 & \sqrt{\tau_{2_k}} \\ \end{bmatrix} \otimes \mathbf{I}_m$.

The log-likelihood can be written as:
\begin{multline}
	\ln p_{\mathbf{x}}\left(\left\{\mathbf{x}_k\right\}_{k\in[1,K]}; \mathbf{M}, \mathbf{T}_k, H_0\right) = -2m \, K \ln\pi -K \, \ln \lvert \mathbf{M} \rvert \\ + 2 \sum_{k=1}^K \ln \lvert \mathbf{T}_k^{-1} \rvert
	- \sum_{k=1}^K \mathbf{x}_k^H \mathbf{T}_k^{-1} \mathbf{M}^{-1} \mathbf{T}_k^{-1} \mathbf{x}_k \, .
	\label{eq:log-vraisamblance_2_ss_antennes}
\end{multline}

According to \cite{Petersen2012} (82), the derivative with respect to $\mathbf{T}_k^{-1}$ is:
\begin{equation}
	\cfrac{\partial \ln p_{\mathbf{x}}(\left\{\mathbf{x}_k\right\}_{k\in[1,K]}; \mathbf{M}, \mathbf{T}_k, H_0)}{\partial \mathbf{T}_k^{-1}} 
	= 2 \,\mathbf{T}_k \\ - 2 \,\text{Re} \left( \mathbf{M}^{-1} \mathbf{T}_k^{-1} \mathbf{x}_k \, \mathbf{x}_k^H \right)\, .
\end{equation}

Following the same approach as in Appendix \ref{sec:appendixA}, we obtain the minimum for $\mathbf{T}_k$ for a fixed $\mathbf{M}$:
\begin{equation}
	\widehat{\mathbf{T}}_k = \text{Re} \left( \mathbf{M} ^{-1} \widehat{\mathbf{T}}_k ^{-1} \mathbf{x}_k \, \mathbf{x}_k^H \right) \, ,
\end{equation}
where
\begin{equation}
	\widehat{\mathbf{T}}_k = \begin{bmatrix} \sqrt{\hat{\tau}_{1_k}} & 0 \\ 0 & \sqrt{\hat{\tau}_{2_k}} \\ \end{bmatrix} \otimes \mathbf{I}_m \, ,
 \label{eq:T_k}
\end{equation}
and
\begin{eqnarray}
\hat{\tau}_{1_k} \!\!\!& = & \!\!\!t_1 + \sqrt{\cfrac{t_1}{t_2}} \, t_{12}\, , \\
\hat{\tau}_{2_k} \!\!\!& =  & \!\!\!t_2 + \sqrt{\cfrac{t_2}{t_1}} \, t_{12}\, ,
\end{eqnarray}
with 
\begin{eqnarray}
t_1 \!\!\!& = &\!\!\!\cfrac{1}{m} \, \mathbf{x}_{1,k}^H \mathbf{M}_{11}^{-1} \mathbf{x}_{1,k} \, ,\\
\!\!\!t_2 & = & \!\!\! \cfrac{1}{m} \, \mathbf{x}_{2,k}^H \mathbf{M}_{22}^{-1} \mathbf{x}_{2,k}\, ,\\
t_{12} \!\!\!& = & \!\!\!\cfrac{1}{m} \, \text{Re} \left( \mathbf{x}_{1,k}^H \mathbf{M}_{12}^{-1} \mathbf{x}_{2,k} \right)\, .
\end{eqnarray}

Replacing $\mathbf{T}_k$ by $\widehat{\mathbf{T}}_k$ in \eqref{eq:log-vraisamblance_2_ss_antennes} and deriving with respect to $\mathbf{M}^{-1}$ lead to:
\begin{equation}
	\cfrac{\partial \ln p_{\mathbf{x}}(\left\{\mathbf{x}_k\right\}_{k\in[1,K]}; \mathbf{M}, \widehat{\mathbf{T}}_k, H_0)}{\partial \mathbf{M}^{-1}}  =  K \, \mathbf{M} 
	 - \sum_{k=1}^K \left( \widehat{\mathbf{T}}_k^{-1} \mathbf{x}_k \right) \left( \widehat{\mathbf{T}}_k^{-1} \mathbf{x}_k \right)^H \, , 
\end{equation}
and the minimum in $\mathbf{M}$ is given by:
\begin{align}
	\widehat{\mathbf{M}} 
	& = \cfrac{1}{K} \sum_{k=1}^K \left( \widehat{\mathbf{T}}_k^{-1} \mathbf{x}_k \right) \left( \widehat{\mathbf{T}}_k^{-1} \mathbf{x}_k \right)^H \, , \\
	& = \cfrac{1}{K} \sum_{k=1}^K \widehat{\mathbf{T}}_k^{-1} \mathbf{x}_k \mathbf{x}_k^H \widehat{\mathbf{T}}_k^{-1} \, .
	\label{eq:Tyler}
\end{align}

The estimator \eqref{eq:Tyler} is independent of the textures. This could be shown by substituting $\mathbf{x}_k = \begin{bmatrix} \mathbf{x}_{1,k} \\ \mathbf{x}_{2,k} \end{bmatrix}$ by $\begin{bmatrix} \sqrt{\tau_{1_k}} \, \mathbf{c}_{1,k} \\ \sqrt{\tau_{2_k}} \, \mathbf{c}_{2,k} \end{bmatrix}$ in \eqref{eq:T_k} and \eqref{eq:Tyler}.

\bibliographystyle{IEEEbib}
\bibliography{references}

\end{document}